\documentclass[aip,cha,reprint,floatfix]{revtex4-2}
\usepackage{graphicx}
\usepackage{dcolumn}
\usepackage{bm}
\usepackage[utf8]{inputenc}
\usepackage[T1]{fontenc}
\usepackage{algpseudocode}
\usepackage{mathptmx}
\usepackage{amsthm}
\usepackage{enumerate}
\usepackage{booktabs}
\usepackage{tabularx}
\usepackage[mathscr]{eucal}
\usepackage{amsmath,amssymb}
\usepackage{mathtools}
\usepackage{subfigure}
\usepackage[english]{babel}
\usepackage{lmodern}
\newtheorem{theorem}{Theorem}

\newtheorem{remark}{Remark}

\newcounter{algorithm}

\newcommand{\TT}{\mathbb T}
\newcommand{\RR}{\mathbb R}
\newcommand{\ZZ}{\mathbb Z}
\newcommand{\eps}{\varepsilon}

\newcommand\cO{{\mathcal O}}

\newcommand\mvector{\boldsymbol}

\newcommand\vv{\mvector{v}}

\newcommand\vx{\mvector{x}}
\newcommand\vy{\mvector{y}}

\newcommand\vA{\mvector{A}}

\newcommand\vF{\mvector{F}}

\newcommand\vI{\mvector{I}}

\newcommand\vK{\mvector{K}}

\newcommand\vM{\mvector{M}}

\newcommand\vR{\mvector{R}}

\newcommand\vX{\mvector{X}}
\newcommand\vY{\mvector{Y}}

\newcommand\vvarphi{\mvector{\varphi}}

\newcommand\vPhi{\mvector{\Phi}}

\newcommand\field{\mathbb}

\newcommand\T{\field{T}}

\newcommand\rmd{\mathrm{d}}

\newcommand\rmi{\mathrm{i}\mspace{1mu}}

\begin{document}
\mathtoolsset{
mathic 
}
\title{document}
\title{Dynamics, periodic orbits and \(C^1\) non-integrability  of the ABC flow}
\author{Wojciech Szumi{\'n}ski}
\email{w.szuminski@if.uz.zgora.pl}
\affiliation{Institute of Physics, University of Zielona G\'ora,
Licealna 9, PL-65-407 Zielona G\'ora, Poland}

\author{Jaume Llibre}
\email{jaume.llibre@uab.cat}
\affiliation{Departament de Matem\`atiques,
Universitat Aut\`onoma de Barcelona,
08193 Bellaterra, Barcelona, Catalonia, Spain}
\affiliation{Reial Acad\`emia de Ci\`encies i Arts de Barcelona,
La Rambla 115, 08002 Barcelona, Catalonia, Spain}
\begin{abstract}
We study the Arnold–Beltrami–Childress (ABC) flow in perturbative regimes near its three elementary integrable coordinate axes. Starting from a degenerate family of periodic streamlines of the limiting integrable system, we use first-order averaging to prove the existence of two isolated periodic solutions of the perturbed flow. The three perturbative regimes are related by the cyclic symmetry of the ABC vector field. In each case, the corresponding two-dimensional averaged system possesses two simple equilibria, one elliptic and the other hyperbolic, depending on the sign of the relevant parameter ratio. Poincaré sections illustrate how the degenerate periodic structure of the integrable limit breaks under perturbation and how the isolated periodic orbits predicted by averaging emerge within the surrounding dynamics. These periodic orbits provide the natural link between the perturbative analysis and the integrability problem. The eigenvalues of the linearized averaged system determine the leading-order behavior of the nontrivial characteristic multipliers of the bifurcating periodic solutions. Combining this relation with the Poincar\'e--Llibre--Valls
criterion, we prove that, in a neighborhood of each of these periodic orbits, there exists no nonconstant first integral \(H\in C^1\) that is regular along the orbit. The analytical results are further illustrated and confirmed by direct shooting, monodromy-matrix computations, and continuation of the elliptic and hyperbolic branches, demonstrating that the first-order averaging approximation remains quantitatively accurate over a substantially wider parameter range than the strict asymptotic regime.
\end{abstract}
\maketitle
\noindent\textbf{The Arnold--Beltrami--Childress (ABC) flow is a classical model of
three-dimensional incompressible dynamics in which regular and chaotic
motions coexist. Near its integrable coordinate axes, the flow approaches
a degenerate limit containing continuous families of periodic
streamlines. We show how these families break under small perturbations
for which all three ABC parameters are nonzero, and how isolated elliptic
and hyperbolic periodic orbits emerge from them. First-order averaging
proves the existence of these orbits, determines their leading-order
locations, and provides leading-order information on their
characteristic multipliers. This connects the breakup of the integrable
periodic structure directly with the non-existence of regular \(C^1\)
first integrals near the bifurcating orbits. Poincar\'e sections, shooting
computations, monodromy-matrix calculations, and numerical continuation
reveal how the resulting periodic branches are embedded in the
surrounding dynamics and how far the perturbative predictions remain
accurate beyond the asymptotic regime. The work thus provides a unified
perturbative description connecting degenerate integrable motion, the
emergence and stability of isolated periodic orbits, and \(C^1\)
non-integrability in the ABC flow.}
\section{Introduction}

We consider the Arnold--Beltrami--Childress vector field, commonly referred to as the   \textit{ABC flow}, which exhibits a remarkably wide variety of dynamical phenomena~\cite{Arnold1965,Beltrami1889,Childress1970}. It is
defined by the system $\dot \vx=\vv$ given explicitly by
\begin{equation}
\label{eq:abc}
\begin{cases}
\dot x = a\sin z+c\cos y,\\
\dot y = b\sin x+a\cos z,\\
\dot z = c\sin y+b\cos x.
\end{cases}
\end{equation}
where
\[
(x,y,z)\in\TT^3=(\RR/2\pi\ZZ)^3,
\qquad
a,b,c\in\RR.
\]
 
System~\eqref{eq:abc} is one of the classical models in the theory of
nonlinear dynamical systems. It describes a steady solution of the
Euler equations for an ideal incompressible fluid with periodic boundary
conditions. For this reason, the ABC flow has played an important role in
fluid mechanics, magnetohydrodynamics, and the theory of chaotic advection~\cite{FriedlanderYudovich1999}. In
particular, it provides a simple, explicit model in which periodic and quasi-periodic motions may coexist with chaotic trajectories~\cite{DombreEtAl1986}.

Further analytical results have clarified several specific features of
the ABC dynamics. In the paper~\cite{HuangZhaoDai1998}, Huang, Zhao, and Dai studied the near-integrable regime with one small coefficient and established conditions for the persistence
of invariant tori, while also deriving a Melnikov-type criterion for the
existence of chaotic streamlines. The existence
and location of stationary points for particular parameter families were
investigated by Ershkov~\cite{Ershkov2016}. More recently, Xin, Yu, and
Zlato\v{s} in~\cite{XinYuZlatos2016} proved the existence of nonperturbative periodic orbits in the
fully symmetric case \(a=b=c=1\), with rotation vectors parallel to the
three coordinate directions. These results illustrate the range of analytical approaches that have
been used to study invariant structures and periodic dynamics in the ABC
flow.

The  ABC system~\eqref{eq:abc} is a volume-preserving flow, which is particularly relevant
to the integrability problem considered in this paper. Indeed, the vector
field \(\vv=(v_1,v_2,v_3)^{\mathsf T}\) associated with
system~\eqref{eq:abc} is divergence-free:
\[
\nabla\cdot \vv
=
\frac{\partial v_1}{\partial x}
+
\frac{\partial v_2}{\partial y}
+
\frac{\partial v_3}{\partial z}
=
0.
\]
Consequently, the flow preserves the standard volume form on \(\TT^3\).
In particular, if \(\Gamma\) is a periodic orbit and \(M_\Gamma\)
denotes the monodromy matrix of the variational equations along
\(\Gamma\), then Liouville's formula gives
\[
\det M_\Gamma=1.
\]
This identity is useful in the numerical computations, since it provides
a direct consistency check for the integration of the variational
equations, which will be computed in the second part of the paper. The analytical non-integrability argument developed in this
paper, however, does not rely on this determinant relation alone. It is
based on the characteristic multipliers of the periodic orbits and on the
Poincar\'e--Llibre--Valls criterion for the non-existence of a regular
\(C^1\) first integral~\cite{Poincare1891a,Poincare1897,Llibre:11a::}.

The integrability problem for the ABC flow has a long and engaging history. Since the system is three-dimensional and possesses the invariant
volume measure, the existence of one nonconstant first integral,
together with the Jacobi last multiplier, is sufficient to reduce the
system to quadratures~\cite{Whittaker1988}. Thus, the crucial question is whether there exists a nonconstant scalar function
\[
 F(x,y,z):\TT^3\longrightarrow\RR
 \]
such that
\[
\frac{\rmd}{\rmd t}F(x(t),y(t),z(t))=0
 \]
along every solution $(x(t),y(t),z(t))$ of~\eqref{eq:abc}. If such a first integral exists, then
the streamlines governed by~\eqref{eq:abc} are confined to the invariant surface 
\(F=\mathrm{const}\). In its absence, the dynamics is not restricted by
such a global foliation and may exhibit genuinely three-dimensional
transport.

The ABC flow possesses elementary integrable subfamilies when one of the
parameters \(a,b,c\) vanishes. These subfamilies form the coordinate
planes in the parameter space, and their pairwise intersections define the
integrable axes. The present paper studies small perturbations of these
axes in directions transverse to the corresponding integrable coordinate
planes. Perturbing only one of the two transverse parameters keeps the
system inside an integrable coordinate plane, whereas perturbing both
produces a genuinely three-dimensional perturbation of the degenerate
integrable flow. 

Rigorous results on the non-integrability of the ABC flow are considerably
more subtle than the elementary integrable reductions described above.
Ziglin~\cite{Ziglin1996,Ziglin1998} established partial non-integrability results by studying the monodromy of the normal
variational equations along a certain particular solution. In
particular, he proved the absence of real meromorphic first integrals in
several parameter regimes, including cases with two equal coefficients,
up to certain exceptional parameter values. The fully symmetric case
\(a^2=b^2=c^2\neq0\), which is one of the most extensively studied cases
from the numerical point of view, was also shown to be non-integrable.
Later on, Maciejewski and Przybylska~\cite{MaciejewskiPrzybylska2002}
strengthened and completed part of this analysis by combining Ziglin's
theory with  the differential Galois approach. They proved, in particular, the
meromorphic non-integrability of the ABC flow in the case
\(a^2=b^2\) and \(abc\neq0\).

Llibre and Valls~\cite{LlibreValls2012} used a different approach, deriving sufficient conditions for the non-existence of $C^1$ first integrals for the ABC flow. Their analysis relied on a particular periodic solution previously obtained by Chicone through geometric regular perturbation theory~\cite{Chicone1995}. Using this known periodic orbit, they computed the associated monodromy matrix and applied a Poincar\'e-type criterion to establish $C^1$ non-integrability under an appropriate transversality condition.

The problem considered in this paper is different. We study periodic orbits bifurcating from the degenerate periodic families associated with the three integrable axes of the ABC flow. The Poincar\'e sections show what happens to these families under perturbation. In the integrable limit, the periodic orbits form continuous families, while after both parameters are switched on, this degeneracy is broken and isolated periodic orbits appear. We use first-order averaging to study this phenomenon analytically and to determine which periodic orbits bifurcate from the unperturbed families. Near the a-axis, after a transverse blow-up and the introduction of the fast angular variable, the problem reduces to a two-dimensional averaged system with two simple equilibria. They give rise to two periodic solutions of the ABC flow. The cases of the b- and c-axes are obtained by cyclic symmetry. 

The averaging analysis also provides the spectral information needed for the integrability analysis. By the theorem of Llibre and Zhang~\cite{LlibreZhang:R3}, the eigenvalues of the Jacobian matrix of the averaged vector field determine, to first order, the nontrivial characteristic multipliers of the bifurcating periodic solutions. Hence, once the averaged equilibria are shown to be simple, the corresponding transverse multipliers are different from 1 for all sufficiently small $\varepsilon>0$. The only remaining multiplier equal to 1 is the trivial one associated with the flow direction. The Poincar\'e–Llibre–Valls criterion can therefore be applied to exclude the existence of a nonconstant $C^1$ first integral whose gradient does not vanish along the periodic orbit.

The contribution of the present paper is thus twofold. First, it provides
an explicit averaging construction of periodic streamlines bifurcating
from the degenerate periodic families located on the three integrable
axes. Second, it combines this construction with the
Llibre--Zhang relation between averaged eigenvalues and characteristic
multipliers in order to obtain a direct  criterion for regular
\(C^1\)-non-integrability. In this way, the proof does not require a
separate perturbative computation of the monodromy matrix of the periodic
orbit: both the existence of the orbit and the leading-order spectral
obstruction are extracted from the same averaged system.

The analytical results are complemented by a direct numerical study of the original ABC system. Poincar\'e sections are used to visualize the breakup of the degenerate periodic families as the system is perturbed away from the integrable axes and the isolated periodic orbits predicted by averaging emerge. The bifurcating periodic solutions are then refined by a shooting method, and their characteristic multipliers are computed from the corresponding monodromy matrices. These numerical values are compared with the first-order predictions obtained from the averaged system. The computations confirm the asymptotic formulas with high accuracy in the perturbative regime and show that the qualitative elliptic and hyperbolic classification remains valid over a wider parameter range. Numerical continuation is further used to follow both periodic branches away from the perturbative regime and to determine how the accuracy of the averaged approximation deteriorates with increasing distance from the integrable axis. The numerical analysis therefore follows specifically the periodic branches constructed by averaging and quantifies the range over which
the corresponding asymptotic predictions remain accurate.

\section{The ABC flow and numerical evidence}
\label{sec:abc-numerical-evidence}

In this section we describe the elementary integrable subfamilies of the ABC
flow and explain the perturbative setting used in the rest of the paper. We
then present Poincar\'e sections near the integrable \(a\)-axis, which provide
the numerical motivation for the averaging construction.

\subsection{The ABC vector field and integrable axes}
\label{subsec:abc-integrable-axes}

We recall the elementary integrable subfamilies of the ABC flow. If
\(c=0\), system~\eqref{eq:abc} becomes
\[
\dot x=a\sin z,\qquad
\dot y=b\sin x+a\cos z,\qquad
\dot z=b\cos x .
\]
In this case, the variables \((x,z)\) form a closed two-dimensional
subsystem, and
\[
H_{c=0}(x,z)=a\cos z+b\sin x
\]
is a first integral. Indeed,
\[
\frac{\rmd}{\rmd t}H_{c=0}
=
-a\sin z\,\dot z+b\cos x\,\dot x=0.
\]
The remaining variable \(y\) is then obtained by quadrature.

The other two integrable cases follow from the cyclic symmetry
\begin{equation}
\label{eq:sym}
(x,y,z)\mapsto(y,z,x),
\qquad
(a,b,c)\mapsto(b,c,a),
\end{equation}
 with the corresponding first integrals obtained by applying the same
permutation to \(H_{c=0}\).    

The \(a\)-axis is the intersection of the two integrable coordinate planes
\(b=0\) and \(c=0\). We denote it by
\begin{equation}
\label{eq:aa}
\mathcal A_a=\{(a,b,c):a\neq0,\ b=0,\ c=0\}.
\end{equation}
On this axis system~\eqref{eq:abc} reduces to
\begin{equation*}
\label{eq:vh0}
\dot x=a\sin z,\qquad
\dot y=a\cos z,\qquad
\dot z=0.
\end{equation*}
In particular, the torus \(z=0\) is invariant and is foliated by the periodic
solutions
\begin{equation}
\label{eq:sol}
x(t)=x_0,\qquad y(t)=y_0+at,\qquad z(t)=0,
\end{equation}
with unperturbed period
\[
T_0=\frac{2\pi}{|a|}.
\]
Thus the unperturbed flow on \(\mathcal A_a\) possesses a degenerate
one-parameter family of periodic solutions.
For this reason, in the analytical part of the paper we perform the complete
averaging calculation  near \(\mathcal A_a\); the two remaining regimes
can be  obtained by cyclic permutation~\eqref{eq:sym}.

\subsection{Poincar\'e sections near the integrable \(a\)-axis}
\label{subsec:poincare-a-axis}

We now provide numerical evidence for the transition from strongly chaotic dynamics to the integrable $a$-axis. Throughout this subsection, we fix
$
a=1
$
and move to the \(a\)-axis by gradually decreasing values of the remaining parameters
\[
(b,c)=(\rho,\rho),\qquad \rho\downarrow0.
\]
This choice is made solely for visualization. It provides a simple view of the system dynamics approaching the integrable regime, with $\rho= 0$.

The Poincar\'e section is defined by
\[
y=0\pmod{2\pi},\qquad \dot y>0,
\]
and is projected onto the \((x,z)\)-plane. For visualization purposes, the
angular coordinate \(x\) is represented in the fundamental interval
\([-\pi,\pi]\), while the range of \(z\) is restricted to the region of
interest. Thus, we plot
\[
(x,z)\in[-\pi,\pi]\times I,
\]
where the vertical interval \(I\) is chosen separately in each panel to make
the relevant structures clearly visible.

Fig.\ref{fig:abc-poincare-transition} shows the resulting sections for
\[
b=c=10^{-1},\quad b=c=10^{-2},\quad b=c=10^{-3},
\quad b=c=0.
\]
Each color represents the intersections generated by a distinct initial
condition. For \(b=c=10^{-1}\), shown in
Fig.~\ref{fig:abc-poincare-transition}(a), the section exhibits a broad
region of scattered points near \(x=-\pi/2\), consistent with chaotic
trajectories arising in the neighborhood of the broken separatrix. In
contrast, near \(x=\pi/2\), one observes a family of smooth invariant
curves surrounding a periodic orbit.

For smaller values of the parameters \(b=c=10^{-2}\), see
Fig.~\ref{fig:abc-poincare-transition}(b), the chaos contracts
significantly. Most of the section is now covered by regular invariant curves,
whereas the remaining chaotic trajectories are confined to a narrow layer
surrounding the separatrix-like structure issuing from the neighborhood of
\(x=-\pi/2\). At the same time, the elliptic island centred near \(x=\pi/2\)
remains unchanged.

For \(b=c=10^{-3}\), the Poincar\'e section shown in
Fig.~\ref{fig:abc-poincare-transition}(c) is almost entirely regular.
No sizable chaotic layer is visible at the scale of the figure, while
the separatrix-like structure near \(x=-\pi/2\) and the elliptic island
near \(x=\pi/2\) remain clearly identifiable. Thus, as \(b\) and \(c\)
decrease, the section becomes progressively more regular and approaches
the degenerate integrable structure observed in the limit \(b=c=0\).

\begin{figure*}[htp]
\centering
\subfigure[$b=c=10^{-1}$]{
\includegraphics[width=0.4\textwidth]{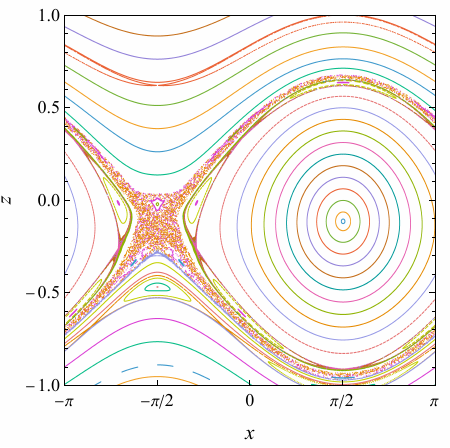}}\hspace{0.5cm}
\subfigure[$b=c=10^{-2}$]{
\includegraphics[width=0.4\textwidth]{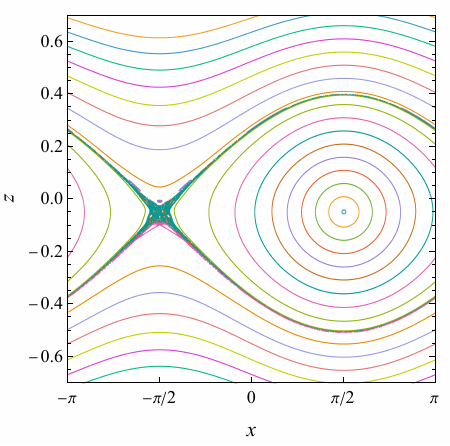}}\\
\subfigure[$b=c=10^{-3}$]{
\includegraphics[width=0.4\textwidth]{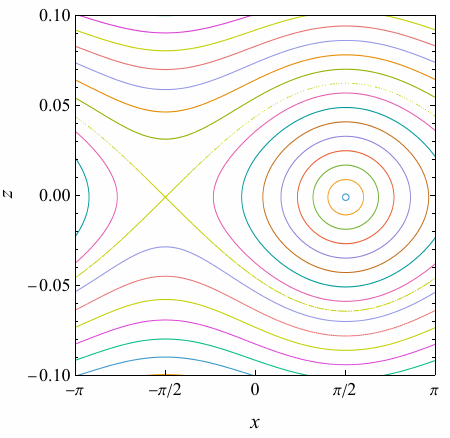}}\hspace{0.5cm}
\subfigure[$b=c=0$]{
\includegraphics[width=0.405\textwidth]{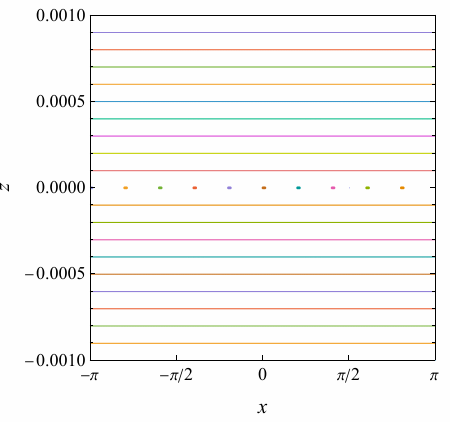}}
\caption{(Color online)
Poincar\'e sections of the ABC flow~\eqref{eq:abc} for \(a=1\) and
decreasing values of \(b=c\). The section is defined by
\(y=0\pmod{2\pi}\), \(\dot y>0\), and represented in the
\((x,z)\)-plane. Each color corresponds to a distinct initial condition. Panel~(a), \(b=c=10^{-1}\), shows a broad chaotic
layer near \(x=-\pi/2\). In panel~(b), \(b=c=10^{-2}\), this layer is
considerably reduced and concentrated near a separatrix-like structure.
For \(b=c=10^{-3}\), panel~(c) displays an almost integrable structure.
Panel~(d) shows the integrable limit \(b=c=0\), where the level \(z=0\)
represents the degenerate family of periodic solutions. The range of
the vertical coordinate \(z\) is adjusted separately in each panel in
order to resolve the shrinking structures.}
\label{fig:abc-poincare-transition}
\end{figure*}

Finally, Fig.~\ref{fig:abc-poincare-transition}(d) corresponds to the
integrable limit \(b=c=0\). As explained in
Subsection~\ref{subsec:abc-integrable-axes}, the invariant torus \(z=0\) is
foliated by the degenerate family of periodic solutions~\eqref{eq:sol}.
Hence, in the Poincar\'e section \(y=0\pmod{2\pi}\), the line \(z=0\)
represents the trace of this one-parameter family for the integrable
\(a\)-axis system.

The numerical results suggest that, when both $b$ and $c$ are small
and nonzero, this degenerate periodic family breaks into two isolated periodic
orbits, one elliptic and one hyperbolic. This is precisely the situation
studied analytically in Section~\ref{sec:periodic-streamlines}, where the
existence of these two periodic solutions is proved by first-order
averaging. Their characteristic multipliers are subsequently used to
establish the $C^1$-non-integrability of the ABC flow.

\section{Preliminaries}
\label{sec:preliminaries}

In this section we recall the main tools used in the proof of our
results. The first is the classical first-order averaging theorem,
which gives periodic solutions from simple zeros of the averaged vector
field. The second is the Poincar\'e--Llibre--Valls criterion, which
provides an obstruction to the existence of a regular \(C^1\) first
integral in terms of the characteristic multipliers of a periodic
orbit. We also recall the relation between the eigenvalues of the
averaged system and the characteristic multipliers of the periodic
solutions obtained by averaging. For a more comprehensive exposition of these topics, we refer the reader to~\cite{Llibre:11a::,LlibreZhang:R3,LLIBRE2023103943}.

\subsection{ The first-order averaging}
\label{subsec:first-order-averaging}

Averaging theory is a well--known perturbative approach in the qualitative theory of
differential equations. In its simplest form, it replaces a non-autonomous
system with a small periodic perturbation by an autonomous averaged system
\cite{Verhulst1996,LlibreNovaesTeixeira:14,GineGrauLlibre:13,
BuicaLlibre:04,LlibreZhang:R3,LlibreTian:R4}. The simple equilibria of this
averaged system generate periodic solutions of the original system.
In the present paper, this method will be applied near the integrable axes of
the ABC flow, after one of the angular variables has been chosen as the new
independent variable.

 Consider the system
\begin{equation}
\label{eq:averaging-standard}
\frac{\rmd \vX}{\rmd s}
=
\varepsilon\,\vF(s,\vX)
+
\varepsilon^2\,\vR(s,\vX,\varepsilon),
\qquad
\vX\in\Omega\subset\RR^m,
\end{equation}
where \(\Omega\) is an open set, the functions \(\vF\) and \(\vR\) are
sufficiently smooth, and both are \(T\)-periodic with respect to the variable
\(s\). The first averaged vector field is defined by
\begin{equation}
\label{eq:averaged-function}
\bar{\vF}(\vX)
=
\frac{1}{T}
\int_0^T
\vF(s,\vX)\,\rmd s.
\end{equation}

\begin{theorem}[First-order averaging theorem]
\label{thm:first-order-averaging}
Assume that \(\vX_*\in\Omega\) is a simple zero of the averaged vector field
\(\bar{\vF}\), that is,
\begin{equation}
\label{eq:simple-zero}
\bar{\vF}(\vX_*)=0,
\qquad
\det D_{\vX}\bar{\vF}(\vX_*)\neq0.
\end{equation}
Then, for all sufficiently small \(\varepsilon>0\), system
\eqref{eq:averaging-standard} possesses a \(T\)-periodic solution
\(\vX(s,\varepsilon)\) such that
\[
\vX(0,\varepsilon)\longrightarrow \vX_*
\qquad
\text{as}
\qquad
\varepsilon\to0.
\]
Equivalently, every simple zero of the averaged vector field gives rise to a
periodic solution of the original perturbed system.
\end{theorem}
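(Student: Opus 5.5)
We prove Theorem~\ref{thm:first-order-averaging} by the standard displacement-map (Lyapunov--Schmidt) approach combined with the implicit function theorem. Let \(\vX(s,\vy,\varepsilon)\) denote the solution of \eqref{eq:averaging-standard} with \(\vX(0,\vy,\varepsilon)=\vy\), for \(\vy\) in a compact neighborhood \(V\subset\Omega\) of \(\vX_*\). Because the right-hand side is \(\cO(\varepsilon)\), this solution exists on \([0,T]\) for \(\varepsilon\) small, uniformly in \(\vy\in V\). Since the system is \(T\)-periodic in \(s\), a solution is \(T\)-periodic if and only if \(\vX(T,\vy,\varepsilon)=\vy\). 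So we study the displacement map
\[
\vvarphi(\vy,\varepsilon)=\vX(T,\vy,\varepsilon)-\vy .
\]

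\textbf{Step 1: expand the displacement map.} Write the integral equation
\[
\vX(s,\vy,\varepsilon)=\vy+\varepsilon\int_0^s \vF(\sigma,\vX(\sigma,\vy,\varepsilon))\,\rmd\sigma+\varepsilon^2\int_0^s \vR\,\rmd\sigma .
\]
A Gronwall estimate gives \(\vX(s,\vy,\varepsilon)-\vy=\cO(\varepsilon)\) uniformly on \([0,T]\times V\). Substituting back and Taylor expanding \(\vF\) in \(\vX\) yields
\[
\vvarphi(\vy,\varepsilon)=\varepsilon T\,\bar{\vF}(\vy)+\varepsilon^2\,\vPhi(\vy,\varepsilon).
\]
Here \(\vPhi\) is continuous and, with \(\vF,\vR\) of class \(C^2\), it is \(C^1\) in \(\vy\), with \(D_{\vy}\vPhi\) bounded uniformly for small \(\varepsilon\). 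The derivative bound comes from the same argument applied to the variational equation, whose solution \(D_{\vy}\vX\) is \(I+\cO(\varepsilon)\).

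\textbf{Step 2: rescale and solve.} Define \(\vM(\vy,\varepsilon)=\vvarphi(\vy,\varepsilon)/\varepsilon=T\bar{\vF}(\vy)+\varepsilon\vPhi(\vy,\varepsilon)\) for \(\varepsilon\neq0\), and extend it by \(\vM(\vy,0)=T\bar{\vF}(\vy)\). Then \(\vM(\vX_*,0)=0\) and \(D_{\vy}\vM(\vX_*,0)=T\,D_{\vX}\bar{\vF}(\vX_*)\), which is invertible by \eqref{eq:simple-zero}.

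The implicit function theorem gives a continuous branch \(\vy(\varepsilon)\) with \(\vy(0)=\vX_*\) and \(\vM(\vy(\varepsilon),\varepsilon)=0\). Only continuity in \(\varepsilon\) and \(C^1\) dependence in \(\vy\) are needed for this. If one prefers to avoid the low regularity in \(\varepsilon\), Brouwer degree works instead. Simplicity of the zero gives \(d_B(\bar{\vF},B,0)=\operatorname{sign}\det D\bar{\vF}(\vX_*)\neq0\) on a small ball \(B\). Homotopy invariance then carries this to \(\vM(\cdot,\varepsilon)\), since \(|\varepsilon\vPhi|<T\min_{\partial B}|\bar{\vF}|\) for \(\varepsilon\) small.

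In either version we get \(\vvarphi(\vy(\varepsilon),\varepsilon)=0\) for small \(\varepsilon>0\). Hence \(\vX(s,\varepsilon):=\vX(s,\vy(\varepsilon),\varepsilon)\) is \(T\)-periodic and satisfies \(\vX(0,\varepsilon)=\vy(\varepsilon)\to\vX_*\).

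\textbf{Main obstacle.} The delicate point is Step 1: we must show that the remainder \(\vPhi\) is genuinely \(\cO(1)\) together with its \(\vy\)-derivative, uniformly on \(V\). This requires uniform existence of solutions on the whole period, a Gronwall bound on \(\vX-\vy\), and the corresponding bound for the variational equation. Once this expansion is in place, the rest is a direct application of the implicit function theorem (or degree theory).
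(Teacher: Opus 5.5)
Your argument is correct and is the paper's own: the appendix proof of Theorem~\ref{thm:LZ} expands the period map as $\mathcal P_\eps(\vX_0)=\vX_0+\eps T\bar{\vF}(\vX_0)+\cO(\eps^2)$ in $C^1$, divides the displacement by $\eps$, extends the result to $\eps=0$ by $T\bar{\vF}$, and applies the implicit function theorem at the simple zero $\vX_*$. You justify the uniform $C^1$ control of the remainder more carefully than the paper does, and your Brouwer-degree variant is a valid alternative, provided you shrink the ball as $\eps\to0$ to obtain $\vy(\eps)\to\vX_*$.
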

The following remark explains how the abstract notation in
Theorem~\ref{thm:first-order-averaging} will be used in the three cyclic
regimes of the ABC flow.

\begin{remark}\upshape
In the applications to the ABC flow, the independent variable \(s\) is
one of the angular variables \(x,y,z\). Hence, in all three cyclic regimes
considered below, the averaging period is
$
T=2\pi.
$
Consistently with the angular convention adopted in the Introduction, we
shall use the symmetric averaging interval
$
[-\pi,\pi].
$ 
Since the integrands are \(2\pi\)-periodic, this choice is equivalent to
the interval \([0,2\pi]\).
More precisely, near the \(a\)-axis we take \(s=y\), near the \(b\)-axis we take
\(s=z\), and near the \(c\)-axis we take \(s=x\).

The variable \(\vX\) in
Theorem~\ref{thm:first-order-averaging} will be identified with the
corresponding reduced variables
\[
\vx_a=(x,Z)^{\mathsf T},\qquad
\vx_b=(y,X)^{\mathsf T},\qquad
\vx_c=(z,Y)^{\mathsf T}
\]
in the three cyclic regimes, respectively. The detailed derivation of these
reduced systems is given in Section~\ref{sec:periodic-streamlines}.
\end{remark}
\subsection{Characteristic multipliers and  \(C^1\)-non-integrability}
\label{subsec:characteristic-multipliers}
The problem of integrability is one of the classical questions in the
qualitative theory of differential equations and mathematical physics. There
are several powerful methods for proving non-integrability, including the
Kowalevskaya--Painlev\'e analysis~\cite{Painleve:1902,Goriely:2001}, Ziglin's
theory~\cite{Ziglin:82::b}, and the Morales--Ramis theory based on the differential Galois approach~\cite{Morales:99::,Morales:01::}. These methods have been
successfully applied to many systems arising in mechanics and mathematical
physics~
\cite{Maciejewski:11::,Maciejewski:05::,10.1063/5.0200592,
mp:13::e,Combot:18::,Szuminski2024,Szuminski2025,Szuminski2026}.

Here we follow a different approach. Instead of using complex-analytic
or differential Galois techniques, we use the classical idea of
Poincar\'e~\cite{Poincare1891a,Poincare1897}, which relates the existence of
first integrals to the characteristic multipliers of periodic orbits. In the
form developed by Llibre and Valls~\cite{Llibre:11a::}, this idea gives a
criterion for the non-existence of regular   \(C^1\) first integrals.

In the present setting this criterion is particularly convenient because the
periodic orbits are obtained by first-order averaging. We shall use the
relation between the eigenvalues of the averaged vector field and the
nontrivial characteristic multipliers of the corresponding periodic orbits,
as established in~\cite{LlibreZhang:R3}. Thus the  
\(C^1\)-non-integrability problem is reduced to the  analysis of the
Jacobian matrices of the averaged systems.

We now recall the   criterion for the non-existence of \(C^1\) first
integrals based on characteristic multipliers of periodic orbits.  Following~\cite{Llibre:11a::}, consider an autonomous differential system
\begin{equation}
\label{eq:autonomous-general}
\dot{\vx}=\vv(\vx),
\qquad
\vx\in\mathcal U\subset\RR^n,
\end{equation}
where \(\vv\in C^1(\mathcal U)\). Assume that
\(\vvarphi(t,\vx_0)\) is a periodic solution of minimal period \(T>0\), that is,
\[
\vvarphi(T,\vx_0)=\vx_0,
\qquad
\vvarphi(t,\vx_0)\neq \vx_0,
\qquad
0<t<T.
\]
The corresponding periodic orbit is
\[
\Gamma
=
\{
\vvarphi(t,\vx_0):0\le t\le T
\}.
\]
The variational equations along this periodic solution are as follows
\begin{equation}
\label{eq:var-general}
\dot{\vy}
=
\vA(t)\vy,
\qquad
\vA(t)=D_{\vx}\vv(\vvarphi(t,\vx_0)),
\end{equation}
where $\vA(t)$ is the Jacobian matrix of the vector field $\vv$ at $\vvarphi(t,\vx_0)$. 

Let \(\vPhi(t)\) be the fundamental matrix solution of
\eqref{eq:var-general} satisfying
\[
\vPhi(0)=\vI_n.
\]
The monodromy matrix associated with the periodic orbit \(\Gamma\) is
\begin{equation}
\label{eq:monodromy-general}
\vM_\Gamma=\vPhi(T).
\end{equation}
Its eigenvalues
\[
\mu_1,\ldots,\mu_n
\]
are called the \textit{characteristic multipliers}, or \textit{Floquet multipliers}, of
\(\Gamma\). Since system~\eqref{eq:autonomous-general} is autonomous, one
multiplier is always equal to \(1\). This trivial multiplier corresponds to
the tangent direction to the periodic orbit and reflects the invariance of
autonomous systems under time translation. For further details, see Proposition~1 in~\cite{Llibre:11a::}.

The following criterion, due to Poincar\'e in its classical form and used here
in the version developed by Llibre and Valls, provides the obstruction to
\(C^1\)-integrability employed in this paper.

\begin{theorem}[Poincar\'e--Llibre--Valls criterion]
\label{thm:PLV}
Let \(\Gamma\) be a periodic orbit of system~\eqref{eq:autonomous-general}.
Assume that the characteristic multiplier \(1\) has algebraic multiplicity
one. Then system~\eqref{eq:autonomous-general} admits no nonconstant first
integral \(H\) of class \(C^1\), defined in a neighborhood of \(\Gamma\), such
that
\[
\nabla H(\vx)\neq0,
\qquad
\vx\in\Gamma.
\]
\end{theorem}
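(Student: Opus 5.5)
The plan is to argue by contradiction: a regular first integral forces the monodromy matrix to fix the covector $\nabla H(\vx_0)$ as well as the vector $\vv(\vx_0)$, and this is incompatible with a simple multiplier $1$. Suppose $H\in C^1$ is a nonconstant first integral defined on a neighborhood of $\Gamma$, with $\nabla H\neq0$ on $\Gamma$. The identity $H(\vvarphi(t,\vx))=H(\vx)$ holds for all $t$ and all $\vx$ near $\Gamma$.

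The first step is to differentiate $H(\vvarphi(T,\vx))=H(\vx)$ with respect to $\vx$ at $\vx=\vx_0$. Since $D_{\vx}\vvarphi(T,\vx_0)=\vPhi(T)=\vM_\Gamma$ and $\vvarphi(T,\vx_0)=\vx_0$, this gives
\[
\nabla H(\vx_0)^{\mathsf T}\vM_\Gamma=\nabla H(\vx_0)^{\mathsf T}.
\]
So $w:=\nabla H(\vx_0)\neq0$ is a left eigenvector of $\vM_\Gamma$ for the eigenvalue $1$. In the same way, differentiating $\vvarphi(T,\vvarphi(s,\vx_0))=\vvarphi(s,\vx_0)$ with respect to $s$ at $s=0$ gives $\vM_\Gamma\vv(\vx_0)=\vv(\vx_0)$. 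Here $\vv(\vx_0)\neq0$ because the orbit is nonconstant. Thus $\vv(\vx_0)$ is a right eigenvector for the eigenvalue $1$ (Proposition~1 of the cited reference).

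The second step shows that $w^{\mathsf T}\vv(\vx_0)=0$. Differentiating $H(\vvarphi(t,\vx_0))=\mathrm{const}$ in $t$ gives $\nabla H\cdot\vv=0$ along $\Gamma$, so the left and right eigenvectors for the eigenvalue $1$ are orthogonal.

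The key linear-algebra step is to show that this orthogonality contradicts the assumption that $1$ has algebraic multiplicity one. If $1$ is a simple eigenvalue, decompose $\RR^n=\ker(\vM_\Gamma-\vI)\oplus\mathrm{Im}(\vM_\Gamma-\vI)$, which holds because the generalized eigenspace equals the eigenspace and is one-dimensional. Any left eigenvector $w$ for the eigenvalue $1$ annihilates $\mathrm{Im}(\vM_\Gamma-\vI)$, since $w^{\mathsf T}(\vM_\Gamma-\vI)=0$. If $w$ also annihilated $\ker(\vM_\Gamma-\vI)=\mathrm{span}\{\vv(\vx_0)\}$, it would vanish on all of $\RR^n$, contradicting $w\neq0$. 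Hence $w^{\mathsf T}\vv(\vx_0)\neq0$, which contradicts the second step.

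The main obstacle is this linear-algebra step: it is exactly where algebraic rather than geometric multiplicity matters. With a nontrivial Jordan block at $1$, left and right eigenvectors can indeed be orthogonal, so the direct-sum decomposition must be justified carefully. A more minor point is checking that the differentiation of the flow with respect to initial conditions is legitimate. This needs $\vv\in C^1$, which makes the flow $C^1$ in $\vx$ and lets the chain rule apply, together with $H\in C^1$.
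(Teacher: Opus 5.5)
Your proof is correct and complete. The paper does not prove Theorem~\ref{thm:PLV}: it quotes the criterion from Llibre and Valls and uses it as a black box, so your proposal is a self-contained argument rather than a reconstruction of one in the text.

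The three identities you derive, $\nabla H(\vx_0)^{\mathsf T}\vM_\Gamma=\nabla H(\vx_0)^{\mathsf T}$, $\vM_\Gamma\vv(\vx_0)=\vv(\vx_0)$ and $\nabla H(\vx_0)\cdot\vv(\vx_0)=0$, are Poincar\'e's classical observations. Your linear-algebra step is sound. Algebraic simplicity gives $\ker(\vM_\Gamma-\vI)=\ker(\vM_\Gamma-\vI)^2$, hence $\ker(\vM_\Gamma-\vI)\cap\mathrm{Im}(\vM_\Gamma-\vI)=\{0\}$, and rank--nullity then yields the direct sum. So a nonzero left $1$-eigenvector cannot annihilate $\vv(\vx_0)$.

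One wording point should be fixed. The identity $H(\vvarphi(t,\vx))=H(\vx)$ holds only while the trajectory stays in the domain of $H$, not for all $t$. You only use it for $t\in[0,T]$ and $\vx$ near $\vx_0$, and continuous dependence on initial data guarantees this.

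The other standard route uses a transversal section $\Sigma$ at $\vx_0$ orthogonal to $\vv(\vx_0)$. Because $\nabla H(\vx_0)\perp\vv(\vx_0)$, the restriction $h=H|_{\Sigma}$ has $\rmd h(\vx_0)\neq0$. It is invariant under the return map $P$, so $1$ is an eigenvalue of $DP(\vx_0)$. In a basis adapted to $\vv(\vx_0)$ and $T_{\vx_0}\Sigma$, the monodromy matrix is block upper-triangular with diagonal blocks $1$ and $DP(\vx_0)$. Hence the multiplier $1$ has algebraic multiplicity at least two.

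That version requires identifying the nontrivial multipliers with the spectrum of $DP(\vx_0)$. This is the same identification the paper uses in Section~\ref{sec:preliminaries} to pass from the reduced system to the full flow. Your argument instead works entirely with $\vM_\Gamma$ and shows exactly where algebraic rather than geometric simplicity enters. It also extends directly to $k$ integrals with independent gradients. These gradients are $k$ independent functionals on the generalized $1$-eigenspace, all vanishing on $\vv(\vx_0)$, so that space has dimension at least $k+1$.
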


The condition \(\nabla H(\vx)\neq0\) means that the first integral is regular
along the periodic orbit. Indeed, if \(H\) is a first integral, then the orbit
\(\Gamma\) is contained in a level surface \(H(\vx)=h\), and the vector
\(\nabla H(\vx)\) is normal to this surface. Therefore, the hypothesis excludes
the degenerate situation in which the periodic orbit lies entirely in the
critical set of the first integral.

\subsection{Relation between the eigenvalues of the averaged system and the characteristic multipliers}

The direct application of Theorem~\ref{thm:PLV} requires the computation
of the characteristic multipliers of a periodic orbit, that is, the
eigenvalues of the corresponding monodromy matrix. In general, this requires solving the variational equations along the periodic solution and evaluating the fundamental matrix over one period,
which may be difficult for systems depending on parameters.

For periodic solutions obtained by first-order averaging, the
leading-order behavior of the characteristic multipliers can instead
be inferred from the linearization of the averaged vector field. We use
the relation formulated by Llibre and Zhang~\cite{LlibreZhang:R3}; a
closely related argument also appears in the study of four-dimensional
zero--Hopf bifurcations by Llibre and Tian~\cite{LlibreTian:R4}.
Since the result is central to our non-integrability argument and the
Llibre--Zhang reference is currently available only as a preprint, we
include in Appendix~\ref{app:LZ-proof} a self-contained proof adapted
to the present notation.

\begin{theorem}[Llibre--Zhang]
\label{thm:LZ}
Consider system~\eqref{eq:averaging-standard} under the assumptions of
Theorem~\ref{thm:first-order-averaging}, and let
\(\vX_*\in\Omega\) be a simple zero of the averaged vector field
\(\bar{\vF}\). Assume, in addition, that \(\vF\) is of class \(C^2\)
with respect to \(\vX\) and that \(\vR\) is of class \(C^1\) with
respect to \(\vX\), uniformly on the compact subsets under
consideration.

Let \(\vX(s,\eps)\) be the \(T\)-periodic solution bifurcating from
\(\vX_*\). Then
\begin{equation*}
\label{eq:periodic-solution-first-order}
\vX(s,\eps)
=
\vX_*+\cO(\eps)
\end{equation*}
uniformly for \(s\in[0,T]\).

Let \(\vPhi(s,\eps)\) denote the fundamental matrix solution of the
variational system
\begin{equation}
\label{eq:var-averaged-system}
\begin{aligned}
&\frac{\rmd\vY}{\rmd s}
=
\eps
\left.
D_{\vX}
\bigl(
\vF(s,\vX)
+
\eps\vR(s,\vX,\eps)
\bigr)
\right|_{\vX=\vX(s,\eps)}
\vY,
\\
&\vPhi(0,\eps)
=
\vI_m.
\end{aligned}
\end{equation}
Then the corresponding monodromy matrix satisfies
\begin{equation}
\label{eq:LZ-monodromy}
\vPhi(T,\eps)
=
\exp\!\left(
\eps T D_{\vX}\bar{\vF}(\vX_*)
\right)
+
\cO(\eps^2).
\end{equation}

Let \(\lambda_1,\ldots,\lambda_m\) be the eigenvalues of
\(D_{\vX}\bar{\vF}(\vX_*)\). If these eigenvalues are simple, then,
after a suitable labeling, the characteristic multipliers
\(\rho_1(\eps),\ldots,\rho_m(\eps)\) of the reduced periodic system
satisfy
\begin{equation}
\label{eq:multiplier-averaging}
\rho_j(\eps)
=
\exp\!\left(
\eps T\lambda_j
\right)
+
\cO(\eps^2),
\qquad
j=1,\ldots,m.
\end{equation}
\end{theorem}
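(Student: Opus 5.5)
The plan has three steps, applied to the \(T\)-periodic solution \(\vX(s,\eps)\) of Theorem~\ref{thm:first-order-averaging}, which satisfies \(\vX(0,\eps)\to\vX_*\): first show the orbit stays \(\cO(\eps)\)-close to \(\vX_*\), then expand the monodromy matrix to first order, then pass to eigenvalues.

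\textbf{Step 1: the orbit stays near \(\vX_*\).} We need the sharper claim \(\vX(0,\eps)=\vX_*+\cO(\eps)\). Write the displacement map as
\[
d(\xi,\eps)=\vX(T;\xi,\eps)-\xi=\eps T\bar{\vF}(\xi)+\cO(\eps^2),
\]
which follows by integrating the equation once and using Gronwall's inequality to bound \(\vX(s;\xi,\eps)-\xi=\cO(\eps)\) on \([0,T]\). Periodic orbits correspond to zeros of \(d(\cdot,\eps)/\eps\). Since \(D_{\vX}\bar{\vF}(\vX_*)\) is invertible, the implicit function theorem gives a zero \(\xi(\eps)=\vX_*+\cO(\eps)\); this uses \(\vF\in C^2\) and \(\vR\in C^1\), so that \(d/\eps\) is \(C^1\) in \((\xi,\eps)\). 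Integrating the equation once more, with a right-hand side of size \(\cO(\eps)\), gives \(\vX(s,\eps)=\vX_*+\cO(\eps)\) uniformly on \([0,T]\).

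\textbf{Step 2: expand the monodromy matrix.} Set
\[
\vA(s,\eps)=D_{\vX}\vF(s,\vX(s,\eps))+\eps D_{\vX}\vR .
\]
Because \(D_{\vX}\vF\) is \(C^1\) in \(\vX\) and the orbit is \(\cO(\eps)\)-close to \(\vX_*\), we have \(\vA(s,\eps)=D_{\vX}\vF(s,\vX_*)+\cO(\eps)\) uniformly. The variational system \eqref{eq:var-averaged-system} reads \(\vPhi'=\eps\vA\vPhi\). By Gronwall, \(\vPhi(s,\eps)=\vI_m+\cO(\eps)\). Substituting into the integral equation gives
\[
\vPhi(T,\eps)=\vI_m+\eps\int_0^T\vA\,\rmd s+\cO(\eps^2)=\vI_m+\eps T D_{\vX}\bar{\vF}(\vX_*)+\cO(\eps^2).
\]
Here differentiation under the integral sign identifies \(\frac1T\int_0^T D_{\vX}\vF(s,\vX_*)\,\rmd s\) with \(D_{\vX}\bar{\vF}(\vX_*)\). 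Since the exponential agrees with \(\vI_m+\eps T D_{\vX}\bar{\vF}(\vX_*)\) up to \(\cO(\eps^2)\), this proves \eqref{eq:LZ-monodromy}.

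\textbf{Step 3: pass to eigenvalues (main obstacle).} Eigenvalues are in general only Hölder continuous, so a naive perturbation argument could give an error of order \(\eps^{2/m}\), which is too weak. The way around this is to factor out the leading part. Write
\[
\vPhi(T,\eps)=\vI_m+\eps\bigl(T\vK+\cO(\eps)\bigr),\qquad \vK=D_{\vX}\bar{\vF}(\vX_*).
\]
The eigenvalues of \(\vPhi(T,\eps)\) are \(1+\eps\nu_j(\eps)\), where the \(\nu_j(\eps)\) are the eigenvalues of \(\vB(\eps)=T\vK+\cO(\eps)\). The eigenvalues \(T\lambda_j\) of \(T\vK\) are simple, so they are analytic functions of the entries of \(\vB(\eps)\) near \(T\vK\). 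Hence \(\nu_j(\eps)=T\lambda_j+\cO(\eps)\) after a suitable labeling. Finally,
\[
\rho_j(\eps)=1+\eps T\lambda_j+\cO(\eps^2)=\exp(\eps T\lambda_j)+\cO(\eps^2),
\]
which is \eqref{eq:multiplier-averaging}. The simplicity assumption is used only in this step, so the care needed here is in the rescaling by \(\eps\) before applying the analytic dependence of simple eigenvalues.
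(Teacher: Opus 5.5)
Your proposal is correct and follows essentially the same route as the paper's appendix proof: the normalized displacement map plus the implicit function theorem give the $\cO(\eps)$ localization, the monodromy matrix is expanded as $\vI_m+\eps T D_{\vX}\bar{\vF}(\vX_*)+\cO(\eps^2)$, and the eigenvalue estimate comes from rescaling to $\bigl(\vPhi(T,\eps)-\vI_m\bigr)/\eps$ before applying perturbation theory for simple eigenvalues. The only minor difference is in the monodromy expansion: you get it from a direct Gronwall estimate on the variational equation along the orbit, whereas the paper differentiates its asserted $C^1$ expansion of the period map, so your version in fact supplies the justification the paper leaves implicit.
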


We now relate the multipliers of the reduced periodic system to those
of the corresponding periodic orbit of the original autonomous flow.
Suppose that the  averaged system is obtained from an
autonomous system of dimension \(n\), and that the independent
variable is an angular variable. Then
\[
m=n-1,
\qquad
T=2\pi.
\]
Assume, in addition, that the angular velocity does not vanish along
the corresponding periodic orbit. If it is positive, then increasing
the angular variable by \(2\pi\) defines the positive-time local
Poincar\'e return map. Consequently, the characteristic multipliers
\(\mu_1(\eps),\ldots,\mu_n(\eps)\) of the periodic orbit can be
labeled so that
\[
\mu_1(\eps)=1,
\qquad
\mu_{j+1}(\eps)=\rho_j(\eps),
\qquad
j=1,\ldots,m,
\]
where \(\rho_j(\eps)\) are the characteristic multipliers of the
reduced \(T\)-periodic averaged system.

If the angular velocity is negative, then the positive-time return
corresponds to decreasing the angular variable by \(2\pi\). The
corresponding Poincar\'e return map is therefore the inverse of the
map obtained by increasing the angular variable by \(2\pi\). Hence,
after a suitable relabeling,
\[
\mu_1(\eps)=1,
\qquad
\mu_{j+1}(\eps)=\rho_j(\eps)^{-1},
\qquad
j=1,\ldots,m.
\]

The above discussion applies to any autonomous system satisfying the
stated assumptions. 

For the ABC flow considered in this paper, one has~\(n=3\) and therefore \(m=2\). Moreover, the averaged Jacobian matrices
obtained below have eigenvalues of the form
$
\lambda,\,-\lambda.
$
Hence, at the level of the first-order asymptotics, replacing
\(\rho_j(\eps)\) by its inverse only interchanges the two nontrivial
characteristic multipliers. Therefore, independently of the orientation
of the angular variable and after a possible relabeling, they satisfy
\begin{equation}
\label{eq:full-multiplier-averaging-abc}
\mu_{j+1}(\eps)
=
\exp\!\left(
2\pi\eps\lambda_j
\right)
+
\cO(\eps^2),
\qquad
j=1,2.
\end{equation}

If \(\lambda_j\neq0\), then
\[
\mu_{j+1}(\eps)-1
=
2\pi\eps\lambda_j+\cO(\eps^2)
\neq0
\]
for all sufficiently small \(\eps>0\). Thus, if both eigenvalues of
the averaged Jacobian are nonzero, neither of the two nontrivial
characteristic multipliers is equal to \(1\). Since an autonomous
system always has the trivial multiplier \(\mu_1(\eps)=1\), it then
follows that \(1\) has algebraic multiplicity one.

Consequently, for the ABC flow, the application of the
Poincar\'e--Llibre--Valls criterion reduces to verifying that the
eigenvalues of the corresponding averaged Jacobian matrices are
nonzero. This will be done in
Section~\ref{sec:nonint}.
\section{Periodic solutions near the integrable axes}
\label{sec:periodic-streamlines}

In this section we prove the existence of periodic solutions of the ABC flow
near the three integrable axes. We give the complete calculation near the
\(a\)-axis \(\mathcal A_a\), while the other two cases follow from the cyclic
symmetry~\eqref{eq:sym}.

We consider a perturbation of \(\mathcal A_a\), with \(a\neq0\) fixed,
of the form
\[
b=\varepsilon^2 b_1,\qquad
c=\varepsilon^2 c_1,
\qquad
0<\varepsilon\ll1.
\]
When both \(b_1\) and \(c_1\) are nonzero, the perturbation leaves the
two integrable coordinate planes whose intersection defines
\(\mathcal A_a\). The averaging construction below, however, only
requires \(b_1\neq0\). We then apply the first-order averaging theorem
recalled in Section~\ref{sec:preliminaries}.
\subsection{Existence theorem}
\label{subsec:existence-theorem}

We first state the existence result in the three cyclic perturbative regimes.

\begin{theorem}
\label{thm:main-cyclic}
Consider the ABC flow~\eqref{eq:abc} on \(\TT^3\). Assume that one of the
following perturbative regimes holds:
\begin{align}
\mathrm{(A)}\qquad
&a\neq0,
&&b=\varepsilon^2 b_1,
&&c=\varepsilon^2 c_1,
&&b_1\neq0,
\notag\\[1mm]
\mathrm{(B)}\qquad
&b\neq0,
&&c=\varepsilon^2 c_1,
&&a=\varepsilon^2 a_1,
&&c_1\neq0,
\label{eq:cyclic-regimes}
\\[1mm]
\mathrm{(C)}\qquad
&c\neq0,
&&a=\varepsilon^2 a_1,
&&b=\varepsilon^2 b_1,
&&a_1\neq0,
\notag
\end{align}
where \(0<\varepsilon\ll1\). Then, in each regime,  the ABC flow possesses two periodic solutions on \(\TT^3\)
bifurcating from the corresponding degenerate family on the integrable axis.

More precisely, let
\[
\theta_+=\frac{\pi}{2},
\qquad
\theta_-=-\frac{\pi}{2}.
\] 
In the three regimes, we introduce the transverse rescaled variables
\[
z=\varepsilon Z,\qquad
x=\varepsilon X,\qquad
y=\varepsilon Y,
\]
respectively.
The first averaged vector fields in the three regimes are as follows
\begin{equation}
\label{eq:averaged-fields-main}
\begin{aligned}
\bar{\vF}_a(x,Z)
&=
\left(Z,\frac{b_1}{a}\cos x\right)^{\mathsf T},
\\
\bar{\vF}_b(y,X)
&=
\left(X,\frac{c_1}{b}\cos y\right)^{\mathsf T},
\\
\bar{\vF}_c(z,Y)
&=
\left(Y,\frac{a_1}{c}\cos z\right)^{\mathsf T}.
\end{aligned}
\end{equation}
In the corresponding   variables
$
(x,Z),\ (y,X)$, and $(z,Y)$, 
their simple zeros are
\[
P^a_\pm=(\theta_\pm,0),
\qquad
P^b_\pm=(\theta_\pm,0),
\qquad
P^c_\pm=(\theta_\pm,0).
\]

These zeros give rise to periodic solutions of the ABC flow~\eqref{eq:abc}.
We denote the corresponding periodic orbits by
\begin{equation}
\begin{aligned}
\Gamma^a_{\pm,\varepsilon}
&=
\{\vx^a_\pm(t,\varepsilon):t\in\mathbb R\},\\
\Gamma^b_{\pm,\varepsilon}
&=
\{\vx^b_\pm(t,\varepsilon):t\in\mathbb R\},\\
\Gamma^c_{\pm,\varepsilon}
&=
\{\vx^c_\pm(t,\varepsilon):t\in\mathbb R\},
\end{aligned}
\end{equation}
in regimes \(\mathrm{(A)}\), \(\mathrm{(B)}\), and \(\mathrm{(C)}\),
respectively.

In regime \(\mathrm{(A)}\), the initial points of the periodic solutions
can be chosen so that
\[
x^a_\pm(0,\varepsilon)
=
\theta_\pm+\mathcal O(\varepsilon),
\quad
y^a_\pm(0,\varepsilon)
=
0,
\quad
z^a_\pm(0,\varepsilon)
=
\mathcal O(\varepsilon^2),
\]
and their periods satisfy
\[
T^a_{\pm,\varepsilon}
=
\frac{2\pi}{|a|}+\mathcal O(\varepsilon^2).
\]

In regime \(\mathrm{(B)}\), the initial points of the periodic solutions
can be chosen so that
\[
x^b_\pm(0,\varepsilon)
=
\mathcal O(\varepsilon^2),
\quad
y^b_\pm(0,\varepsilon)
=
\theta_\pm+\mathcal O(\varepsilon),
\quad
z^b_\pm(0,\varepsilon)
=
0,
\]
and their periods satisfy
\[
T^b_{\pm,\varepsilon}
=
\frac{2\pi}{|b|}+\mathcal O(\varepsilon^2).
\]

In regime \(\mathrm{(C)}\), the initial points of the periodic solutions
can be chosen so that
\[
x^c_\pm(0,\varepsilon)
=
0,
\quad
y^c_\pm(0,\varepsilon)
=
\mathcal O(\varepsilon^2),
\quad
z^c_\pm(0,\varepsilon)
=
\theta_\pm+\mathcal O(\varepsilon),
\]
and their periods satisfy
\[
T^c_{\pm,\varepsilon}
=
\frac{2\pi}{|c|}+\mathcal O(\varepsilon^2).
\]\end{theorem}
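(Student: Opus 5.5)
The plan is to do the complete computation in regime (A) and then transport it to regimes (B) and (C) by the cyclic symmetry~\eqref{eq:sym}. Under the cyclic shift, (A) goes to (B) and (B) to (C), and $z=\varepsilon Z$ goes to $x=\varepsilon X$ and then to $y=\varepsilon Y$. The averaged fields, zeros, initial points and periods in~\eqref{eq:averaged-fields-main} and in the statement are permuted in exactly this way, so only (A) needs proof.

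In regime (A) I substitute $b=\varepsilon^2b_1$, $c=\varepsilon^2c_1$ and $z=\varepsilon Z$ into~\eqref{eq:abc}. Expanding $\sin(\varepsilon Z)$ and $\cos(\varepsilon Z)$ gives
\[
\dot x=a\varepsilon Z+\cO(\varepsilon^2),\qquad
\dot y=a+\cO(\varepsilon^2),\qquad
\dot Z=\varepsilon\,(c_1\sin y+b_1\cos x).
\]
These hold uniformly for $Z$ in a compact set $K$ and $x\in\RR$, with the remainders smooth. Since $\dot y=a+\cO(\varepsilon^2)\neq0$, I take $s=y$ as the new independent variable and divide by $\dot y$. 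This gives
\[
\frac{\rmd x}{\rmd y}=\varepsilon Z+\varepsilon^2R_1,\qquad
\frac{\rmd Z}{\rmd y}=\varepsilon\,\frac{c_1\sin y+b_1\cos x}{a}+\varepsilon^2R_2 .
\]
Here $R_1,R_2$ are smooth and $2\pi$-periodic in $y$ and in $x$. This system has the form~\eqref{eq:averaging-standard} with $\vX=(x,Z)$, $T=2\pi$ and $\Omega=\RR\times\operatorname{int}K$.

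Averaging over $y\in[-\pi,\pi]$ removes the $\sin y$ term and yields $\bar{\vF}_a=(Z,(b_1/a)\cos x)^{\mathsf T}$. Its zeros in one period of $x$ are $(\pm\pi/2,0)$. The Jacobian there is
\[
D\bar{\vF}_a(\theta_\pm,0)=\begin{pmatrix}0&1\\ -(b_1/a)\sin\theta_\pm&0\end{pmatrix},
\]
whose determinant is $\pm b_1/a\neq0$, because $b_1\neq0$. Hence both zeros are simple, and Theorem~\ref{thm:first-order-averaging} gives $2\pi$-periodic solutions $(x,Z)(y,\varepsilon)$ with $(x,Z)(0,\varepsilon)\to(\theta_\pm,0)$. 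The first-order estimate in Theorem~\ref{thm:LZ}, whose smoothness assumptions clearly hold here, sharpens this to $\vX(y,\varepsilon)=\vX_*+\cO(\varepsilon)$. Thus at $y=0$ we have $x=\theta_\pm+\cO(\varepsilon)$ and $Z=\cO(\varepsilon)$, so $z=\varepsilon Z=\cO(\varepsilon^2)$.

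Next I return to time $t$ by solving $\dot y=a+\cO(\varepsilon^2)$ along the solution. If $a>0$, $y$ increases monotonically; if $a<0$, it decreases. In either case the orbit closes on $\TT^3$ after $y$ changes by $2\pi$, because $x$ and $Z$ are $2\pi$-periodic in $y$. The time needed is
\[
T^a_{\pm,\varepsilon}=\int_0^{2\pi}\frac{\rmd y}{|\dot y|}=\frac{2\pi}{|a|}+\cO(\varepsilon^2).
\]
The two orbits are distinct, since their $x$-coordinates at $y=0$ are near $\pi/2$ and $-\pi/2$ respectively.

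The main obstacle I expect is the bookkeeping of orders. One must check that after the blow-up $z=\varepsilon Z$ the leading terms of both components are exactly of order $\varepsilon$. In $\rmd x/\rmd y$ the factor $a$ cancels, and the $c_1\cos y$ term is pushed to order $\varepsilon^2$. One must also check that the remainder is uniformly $\cO(\varepsilon^2)$ and smooth, so that Theorems~\ref{thm:first-order-averaging} and~\ref{thm:LZ} genuinely apply.
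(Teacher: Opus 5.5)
Your proposal is correct and follows essentially the same route as the paper: the blow-up $z=\varepsilon Z$ in regime (A), $y$ as the new independent variable, averaging to $\bar{\vF}_a=(Z,(b_1/a)\cos x)^{\mathsf T}$ with simple zeros $(\pm\pi/2,0)$, recovering the period from $\rmd t/\rmd y=1/\dot y$, and transferring the result to (B) and (C) by the cyclic symmetry. Two small refinements tidy up points the paper passes over quickly: you take the $\cO(\eps)$ localization from Theorem~\ref{thm:LZ}, since Theorem~\ref{thm:first-order-averaging} alone only gives convergence, and you integrate $1/|\dot y|$ so that the case $a<0$ is also covered.
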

\subsection{Proof of the existence theorem}
\label{subsec:proof-existence-theorem}

We prove Theorem~\ref{thm:main-cyclic}. The complete averaging calculation is
carried out in regime \(\mathrm{(A)}\), where the coefficient \(a\) is fixed
and nonzero. The regimes \(\mathrm{(B)}\) and \(\mathrm{(C)}\) are obtained
from this case by cyclic permutation of the variables and the parameters~\eqref{eq:sym}.

Assume
\begin{equation*}
\label{eq:A-params}
a\neq0,
\qquad
b=\varepsilon^2 b_1,
\qquad
c=\varepsilon^2 c_1,
\qquad
b_1\neq0.
\end{equation*}
Then system~\eqref{eq:abc} takes the form
\begin{equation}
\label{eq:A-system}
\begin{cases}
\dot x=a\sin z+\varepsilon^2 c_1\cos y,\\
\dot y=\varepsilon^2 b_1\sin x+a\cos z,\\
\dot z=\varepsilon^2 c_1\sin y+\varepsilon^2 b_1\cos x.
\end{cases}
\end{equation}
It is evident that for $\varepsilon=0$ we obtain the unperturbed system
\begin{equation*}
\label{eq:A-unperturbed}
\dot x=a\sin z,
\qquad
\dot y=a\cos z,
\qquad
\dot z=0.
\end{equation*}
Clearly, the two-dimensional torus $z=0$ is invariant. Indeed, on $z=0$ the
unperturbed system reduces to
\[
\dot x=0,
\qquad
\dot y=a,
\qquad
\dot z=0.
\]
This lifted solution is not periodic in \(\mathbb R^3\), because the
coordinate \(y(t)\) is linear in time. However, its projection onto
\(\mathbb T^3\) is periodic. More precisely, for
\[
T_0=\frac{2\pi}{|a|},
\]
one has
\[
x(t+T_0)=x(t),
\qquad
z(t+T_0)=z(t),
\]
and
\[
y(t+T_0)=y(t)+2\pi\,\operatorname{sgn}(a).
\]
Hence
\[
\bigl(x(t+T_0),y(t+T_0),z(t+T_0)\bigr)
\sim
\bigl(x(t),y(t),z(t)\bigr)
\quad \text{on } \mathbb T^3.
\]
Thus the invariant torus \(z=0\) is foliated by closed orbits on
\(\mathbb T^3\), each of them having minimal period $T_0$.

\subsubsection{Blow-up and angular time}
Since we seek periodic solutions that remain close to $z=0$, we introduce the
blow-up
$
z=\eps Z.
$
Using
\[
\sin(\eps Z)=\eps Z+\cO(\eps^3),
\qquad
\cos(\eps Z)=1-\frac{\eps^2Z^2}{2}+\cO(\eps^4),
\]
we obtain
\begin{equation*}
\begin{cases}
\displaystyle
\dot x
=
\eps aZ+\eps^2 c_1\cos y+\cO(\eps^3),
\\[2mm]
\displaystyle
\dot y
=
a+\eps^2\left(b_1\sin x-\frac{a}{2}Z^2\right)
+\cO(\eps^4),
\\[2mm]
\displaystyle
\dot Z
=
\eps\left(c_1\sin y+b_1\cos x\right).
\end{cases}
\end{equation*}
Since \(a\neq0\), for all sufficiently small \(\varepsilon>0\), the quantity
\(\dot y\) remains nonzero in a neighborhood of the torus \(z=0\). Hence
\(y\) can be used as a new independent variable. Dividing \(\dot x\) and
\(\dot Z\) by \(\dot y\),  and eliminating the time variable, we obtain
\begin{equation}
\label{eq:A-reduced}
\begin{cases}
\displaystyle
\frac{\rmd x}{\rmd y}
=
\eps Z
+\eps^2\frac{c_1}{a}\cos y
+\cO(\eps^3),
\\[2mm]
\displaystyle
\frac{\rmd Z}{\rmd y}
=
\frac{\eps}{a}
\left(c_1\sin y+b_1\cos x\right)
+\cO(\eps^3).
\end{cases}
\end{equation}
In particular, to first order in $\eps$ we have the standard averaging form~\eqref{eq:averaging-standard}, which is
\begin{equation}
\label{eq:A-standard}
\frac{\rmd}{\rmd y}
\begin{pmatrix}x\\ Z\end{pmatrix}
=
\eps \vF_a(y,x,Z)+\cO(\eps^2),
\end{equation}
where
\begin{equation*}
\label{eq:A-G}
\vF_a(y,x,Z)=
\begin{pmatrix}
Z\\[1mm]
\dfrac{1}{a}\left(c_1\sin y+b_1\cos x\right)
\end{pmatrix}.
\end{equation*}

\subsubsection{The averaged system}

According to the notation of Section~\ref{sec:preliminaries}, in regime
\(\mathrm{(A)}\) we have
\[
s=y,\qquad T=2\pi,\qquad \vX=(x,Z)^{\mathsf T}.
\]
Thus the first averaged vector field~\eqref{eq:averaged-function} is
\begin{equation}
\label{eq:A-average-def}
\bar{\vF}_a(x,Z)
=
\frac{1}{2\pi}
\int_0^{2\pi}
\vF_a(y,x,Z)\,\rmd y.
\end{equation}
Since
\[
\frac{1}{2\pi}\int_0^{2\pi}\sin y\,\rmd y=0,
\]
we obtain
\begin{equation}
\label{eq:A-average}
\bar{\vF}_a(x,Z)
=
\begin{pmatrix}
Z\\[1mm]
\dfrac{b_1}{a}\cos x
\end{pmatrix}.
\end{equation}
The equation \(\bar{\vF}_a(x,Z)=0\) has exactly two zeros on
\(\TT\times\RR\). In the interval $[-\pi,\pi]$, these solutions are
\begin{equation}
\label{eq:A-zeros}
P^a_\pm=(\theta_\pm,0),
\qquad
\theta_+=\frac{\pi}{2},
\qquad
\theta_-=-\frac{\pi}{2}.
\end{equation}
The Jacobian matrix of $\bar{\vF}_a(x,Z)$ is as follows
\begin{equation}
\label{eq:A-Jac}
D_{\vX}\bar{\vF}_a(x,Z)
=
\begin{pmatrix}
0&1\\[1mm]
-\dfrac{b_1}{a}\sin x&0
\end{pmatrix}.
\end{equation}
Therefore
\begin{equation*}
\label{eq:zeros}
\det D_{\vX}\bar{\vF}_a(P^a_+)
=
\frac{b_1}{a}\neq0,
\quad
\det D_{\vX}\bar{\vF}_a(P^a_-)
=
-\frac{b_1}{a}\neq0.
\end{equation*}
Hence, both zeros~\eqref{eq:A-zeros} are simple. By
Theorem~\ref{thm:first-order-averaging}, they give rise to two periodic
solutions of the perturbed ABC flow.
Applying the inverse blow-up transformation \(z=\varepsilon Z\) and returning
to the original time variable, we obtain the leading-order localization of
these periodic solutions in the original variables. 
For the reduced system, the periodic solutions bifurcating from \(P^a_\pm\)
satisfy
\[
x^a_\pm(y,\varepsilon)
=
\theta_\pm+\mathcal O(\varepsilon),
\qquad
Z^a_\pm(y,\varepsilon)
=
\mathcal O(\varepsilon).
\]
Consequently, in the original variables,
\[
z^a_\pm(y,\varepsilon)
=
\varepsilon Z^a_\pm(y,\varepsilon)
=
\mathcal O(\varepsilon^2).
\]

The condition \(y(0)=0\) is a phase normalization. Since the ABC flow is
autonomous, a time shift of a periodic solution gives the same periodic orbit.
This phase normalization is independent of the choice of the symmetric
fundamental interval \([-\pi,\pi]\). It merely selects the representative
of the periodic orbit crossing the section \(y=0\pmod{2\pi}\) at
\(t=0\). With this choice, for \(a>0\), the corresponding
periodic solutions have the leading-order form
\begin{equation}
\label{eq:A-leading-solutions}
\begin{cases}
\displaystyle
x^a_\pm(t,\varepsilon)
=
\theta_\pm+\mathcal O(\varepsilon),
\\[1mm]
\displaystyle
y^a_\pm(t,\varepsilon)
=
at+\mathcal O(\varepsilon^2),
\\[1mm]
\displaystyle
z^a_\pm(t,\varepsilon)
=
\mathcal O(\varepsilon^2).
\end{cases}
\end{equation}
In particular, we have
\[
(x^a_\pm(0,\varepsilon),y^a_\pm(0,\varepsilon),z^a_\pm(0,\varepsilon))
=
\left(
\theta_\pm+\mathcal O(\varepsilon),
0,
\mathcal O(\varepsilon^2)
\right).
\]
Thus the two periodic solutions are localized near \(x=\pi/2\) and
\(x=-\pi/2\), respectively, and remain at distance \(\mathcal O(\varepsilon^2)\)
from the unperturbed invariant torus \(z=0\).

We denote these two periodic solutions by
\begin{equation}
\Gamma^a_{\pm,\varepsilon}
=
\bigl\{
(x^a_\pm(t,\varepsilon),
y^a_\pm(t,\varepsilon),
z^a_\pm(t,\varepsilon)):
t\in\mathbb R
\bigr\}.
\end{equation}
Moreover, as \(\varepsilon\to0\), the periodic orbit
\(\Gamma^a_{\pm,\varepsilon}\) converges to the unperturbed periodic orbit
\[
\Gamma^a_{\pm,0}
=
\bigl\{
(\theta_\pm,at,0):t\in\mathbb R
\bigr\}
\subset\TT^3.
\]
To recover the period with respect to the original time \(t\), we use
\[
\frac{\rmd t}{\rmd y}=\frac{1}{\dot y}.
\]
Along the periodic solutions obtained above we have
\[
\dot y
=
a\cos\bigl(\varepsilon Z^a_\pm(y,\varepsilon)\bigr)
+
\varepsilon^2b_1\sin x^a_\pm(y,\varepsilon)
=
a+\cO(\varepsilon^2)
\]
uniformly in \(y\). Hence, for sufficiently small \(\varepsilon>0\),
\(\dot y\) does not vanish and has the same sign as~\(a\).

For \(a>0\) the positive return time corresponding to one winding
in the \(y\)-direction is therefore
\[
\begin{aligned}
T^a_{\pm,\varepsilon}
&=
\int_0^{2\pi}
\frac{dy}{
a\cos\bigl(\varepsilon Z^a_\pm(y,\varepsilon)\bigr)
+
\varepsilon^2 b_1
\sin x^a_\pm(y,\varepsilon)
}
\\[1mm]
&=
\frac{2\pi}{a}
+
O(\varepsilon^2).
\end{aligned}
\]



\subsection{The remaining cases by cyclic permutation}
\label{subsec:remaining-cases}

The remaining two cases follow directly from the \(a\)-axis analysis by the
cyclic symmetry~\eqref{eq:sym} of the ABC flow. Applying this transformation
once gives the perturbative regime near the \(b\)-axis, while applying it
twice gives the regime near the \(c\)-axis.

For the \(a\)-axis, the averaging construction is based on the invariant
torus \(z=0\), the angular variable \(y\), and the transverse rescaling
\(z=\varepsilon Z\). Under the cyclic symmetry~\eqref{eq:sym}, these
quantities transform according to
\[
z=0 \longmapsto x=0 \longmapsto y=0,
\qquad
y \longmapsto z \longmapsto x,
\]
and
\[
z=\varepsilon Z
\longmapsto
x=\varepsilon X
\longmapsto
y=\varepsilon Y.
\]
Thus the calculation carried out for the \(a\)-axis applies directly to the
other two cases.

In all three regimes, the first averaged vector field has the same form
\[
(u,V)'=(V,\mu\cos u),
\]
with the corresponding ratio of the perturbing coefficient to the dominant
one. The three averaged vector fields and their zeros are summarized in
Table~\ref{tab:cyclic-averaged-fields}.

\begin{table}[t]
\centering
\renewcommand{\arraystretch}{1.65}
\begin{tabular}{c c c c}
\toprule
Case
& Variables
& First averaged vector field
& Simple zeros
\\
\midrule
\(\mathrm{(A)}\)
&
\((x,Z)\)
&
\(\displaystyle
\bar{\vF}_a(x,Z)=
\left(
Z,\frac{b_1}{a}\cos x
\right)
\)
&
\(\displaystyle
P^a_\pm=(\theta_\pm,0)
\)
\\[3mm]
\(\mathrm{(B)}\)
&
\((y,X)\)
&
\(\displaystyle
\bar{\vF}_b(y,X)=
\left(
X,\frac{c_1}{b}\cos y
\right)
\)
&
\(\displaystyle
P^b_\pm=(\theta_\pm,0)
\)
\\[3mm]
\(\mathrm{(C)}\)
&
\((z,Y)\)
&
\(\displaystyle
\bar{\vF}_c(z,Y)=
\left(
Y,\frac{a_1}{c}\cos z
\right)
\)
&
\(\displaystyle
P^c_\pm=(\theta_\pm,0)
\)
\\
\bottomrule
\end{tabular}
\caption{First averaged vector fields and their simple zeros in the three
cyclic perturbative regimes.}
\label{tab:cyclic-averaged-fields}
\end{table}

Under the assumptions of regimes \(\mathrm{(A)}\)--\(\mathrm{(C)}\), all
zeros listed in Table~\ref{tab:cyclic-averaged-fields} are simple.
Therefore, Theorem~\ref{thm:first-order-averaging} gives two periodic
solutions of the ABC flow in each perturbative regime. This completes the
proof of Theorem~\ref{thm:main-cyclic}.

\subsection{Phase portrait of the averaged system and numerical illustration of the bifurcating periodic orbits}
\label{subsec:phase-portrait-averaged-system}
The averaged systems obtained in the proof of
Theorem~\ref{thm:main-cyclic} also determine the leading-order type of the
bifurcating periodic solutions. We first classify their equilibria and then
compare the resulting phase portrait with the Poincar\'e dynamics of the
full ABC flow.

\begin{theorem}
\label{thm:abc-stability-compact}
Under the assumptions of Theorem~\ref{thm:main-cyclic}, the eigenvalues of
the Jacobian matrices of the corresponding averaged vector fields at their
simple zeros are given as follows:
\begin{equation}
\label{eq:abc-eigenvalues}
\begin{aligned}
\mathrm{(A)}\quad
P^a_+:\quad
\lambda_{1,2}
&=
\pm\sqrt{-\frac{b_1}{a}},
&\quad
P^a_-:\quad
\lambda_{1,2}
&=
\pm\sqrt{\frac{b_1}{a}},
\\[1mm]
\mathrm{(B)}\quad
P^b_+:\quad
\lambda_{1,2}
&=
\pm\sqrt{-\frac{c_1}{b}},
&\quad
P^b_-:\quad
\lambda_{1,2}
&=
\pm\sqrt{\frac{c_1}{b}},
\\[1mm]
\mathrm{(C)}\quad
P^c_+:\quad
\lambda_{1,2}
&=
\pm\sqrt{-\frac{a_1}{c}},
&\quad
P^c_-:\quad
\lambda_{1,2}
&=
\pm\sqrt{\frac{a_1}{c}}.
\end{aligned}
\end{equation}
Consequently, the type of each equilibrium of the averaged system is
determined by the sign of the corresponding ratio
\[
\frac{b_1}{a},
\qquad
\frac{c_1}{b},
\qquad
\frac{a_1}{c}.
\]
More precisely, if the corresponding ratio is positive, then the
\(+\)-equilibrium is of center type and the \(-\)-equilibrium is a hyperbolic
saddle. If the ratio is negative, the roles are reversed.
\end{theorem}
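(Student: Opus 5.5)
The plan is to compute directly in regime (A) and then transfer the result to (B) and (C) by the cyclic symmetry~\eqref{eq:sym}, exactly as in the proof of Theorem~\ref{thm:main-cyclic}. For (A), I would start from the Jacobian~\eqref{eq:A-Jac} of the averaged field $\bar{\vF}_a$. Evaluating it at $P^a_\pm=(\theta_\pm,0)$ uses $\sin\theta_\pm=\pm1$ and gives
\[
D_{\vX}\bar{\vF}_a(P^a_\pm)=\begin{pmatrix}0&1\\ \mp\dfrac{b_1}{a}&0\end{pmatrix}.
\]
Its trace is zero and its determinant is $\pm b_1/a$. The characteristic polynomial is therefore $\lambda^2\pm b_1/a=0$, so $\lambda_{1,2}=\pm\sqrt{-b_1/a}$ at $P^a_+$ and $\lambda_{1,2}=\pm\sqrt{b_1/a}$ at $P^a_-$, as claimed in~\eqref{eq:abc-eigenvalues}.

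For (B) and (C), I would use the fact, already recorded in Table~\ref{tab:cyclic-averaged-fields}, that all three averaged fields share the common form $(u,V)'=(V,\mu\cos u)$. Here $\mu=b_1/a$, $c_1/b$, or $a_1/c$. The Jacobian is then $\begin{pmatrix}0&1\\-\mu\sin u&0\end{pmatrix}$, and the computation above goes through verbatim with $b_1/a$ replaced by $\mu$. This yields the remaining four lines of~\eqref{eq:abc-eigenvalues}.

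Next comes the classification. Each ratio $\mu$ is nonzero by the standing hypotheses ($b_1\neq0$, $c_1\neq0$, $a_1\neq0$ and nonzero dominant coefficient). If $\mu>0$, then at the $+$-equilibrium $\lambda=\pm\rmi\sqrt{\mu}$, which is a purely imaginary pair, while at the $-$-equilibrium $\lambda=\pm\sqrt{\mu}$, which is a real pair of opposite signs. The latter is a hyperbolic saddle directly from the linearization, since the determinant is negative. If $\mu<0$, the two expressions swap roles.

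The only delicate point is the word ``center'' at the elliptic equilibrium, since purely imaginary eigenvalues by themselves only give a linear center. I would settle this by noting that the averaged system is Hamiltonian, with first integral
\[
E(u,V)=\tfrac12V^2-\mu\sin u.
\]
At the elliptic point this function has a nondegenerate local extremum: its Hessian is $\operatorname{diag}(\mu\sin\theta_\pm,1)$, which is positive definite exactly there. Its level sets are therefore closed curves around the equilibrium, so the equilibrium is a genuine nonlinear center. This is also what the statement's ``center type'' should mean.
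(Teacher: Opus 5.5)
Your proposal is correct and follows the paper's proof essentially step by step: the Jacobian at $P^a_\pm$, the characteristic polynomial $\lambda^2\pm b_1/a$, transfer to (B) and (C) via the common form $(u,V)'=(V,\mu\cos u)$ (equivalently, the cyclic symmetry), and the Hamiltonian $\tfrac12V^2-\mu\sin u$ to upgrade the linear center to a nonlinear one. The only cosmetic difference is that you certify the center through the positive-definite Hessian of the Hamiltonian at the elliptic point, whereas the paper shifts $q=x-\pi/2$ and recognizes the mathematical pendulum.
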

\begin{proof}
We give the proof for regime \(\mathrm{(A)}\). The Jacobian matrix of the
averaged vector field \(\bar{\vF}_a\) is given by~\eqref{eq:A-Jac}, and its
characteristic polynomial is
\begin{equation}
\label{eq:A-charpoly}
p_a(\lambda;x)
=
\lambda^2+\frac{b_1}{a}\sin x.
\end{equation}
At the two simple zeros~\eqref{eq:A-zeros}, this gives
\[
p_a(\lambda;\theta_+)
=
\lambda^2+\frac{b_1}{a},
\qquad
p_a(\lambda;\theta_-)
=
\lambda^2-\frac{b_1}{a}.
\]
Hence
\[
P^a_+:\quad
\lambda_{1,2}=\pm\sqrt{-\frac{b_1}{a}},
\qquad
P^a_-:\quad
\lambda_{1,2}=\pm\sqrt{\frac{b_1}{a}},
\]
as stated in~\eqref{eq:abc-eigenvalues}.

The averaged system
\[
x'=Z,
\qquad
Z'=\frac{b_1}{a}\cos x,
\]
is Hamiltonian in the canonical coordinates $(x,Z)$, with the Hamiltonian function
\begin{equation}
\label{eq:A-averaged-Hamiltonian}
\mathcal H_a(x,Z)
=
\frac{1}{2}Z^2-\frac{b_1}{a}\sin x.
\end{equation}
Indeed,
\[
x'
=
\frac{\partial\mathcal H_a}{\partial Z},
\qquad
Z'
=
-\frac{\partial\mathcal H_a}{\partial x}.
\]
After the shift
\[
q=x-\frac{\pi}{2},
\]
the Hamiltonian becomes
\[
\mathcal H_a(q,Z)
=
\frac{1}{2}Z^2-\frac{b_1}{a}\cos q,
\]
which is the Hamiltonian of a mathematical pendulum. Consequently, if
\(b_1/a>0\), the equilibrium \(P^a_+\) is a nonlinear center, whereas
\(P^a_-\) is a hyperbolic saddle. If \(b_1/a<0\), the roles of the two
equilibria are reversed.

The formulas in regimes \(\mathrm{(B)}\) and \(\mathrm{(C)}\) are obtained by
applying the cyclic permutation~\eqref{eq:sym}. This gives the
remaining eigenvalue relations stated in~\eqref{eq:abc-eigenvalues}. The
classification into center and saddle type follows from the corresponding
Hamiltonian structure and the sign of the relevant parameter ratio.
\end{proof}

We now illustrate the preceding linear analysis by comparing the phase
portrait of the averaged vector field with a Poincar\'e section of the
 ABC flow in Case~\textup{(A)}. We choose
\begin{equation}
\label{eq:parki2}
a=1,\quad b=\varepsilon^2 b_1,\quad c=\varepsilon^2 c_1,\quad
\varepsilon=\frac{1}{500},\quad b_1=c_1=1.
\end{equation}
For these parameter values, the first averaged vector field is
\begin{equation}\label{eq:aver}
\frac{\rmd}{\rmd y}
\begin{pmatrix}x\\ Z\end{pmatrix}
=
\epsilon
\bar{\vF}_a(x,Z)
=\epsilon
\begin{pmatrix}
Z\\
\cos x
\end{pmatrix},
\end{equation}
and its simple zeros are
\[
P^a_+
=
\left(\frac{\pi}{2},0\right),
\qquad
P^a_-
=
\left(-\frac{\pi}{2},0\right).
\] 
 The corresponding eigenvalues~\eqref{eq:abc-eigenvalues} computed at these points are as follows
\begin{equation*}
\label{eq:eigen}
\lambda_{1,2}(P^a_+)=\pm\rmi,
\qquad
\lambda_{1,2}(P^a_-)=\pm1.
\end{equation*}
Therefore, \(P^a_+\) is an elliptic equilibrium, whereas \(P^a_-\)
is a hyperbolic saddle. By the first-order averaging theorem, these two
simple zeros generate two periodic solutions of the  ABC flow for
all sufficiently small \(\varepsilon>0\).

 \begin{figure*}[t]
\centering
\subfigure[Stream plot of averaged system~\eqref{eq:aver}]{
\includegraphics[width=0.4\textwidth]{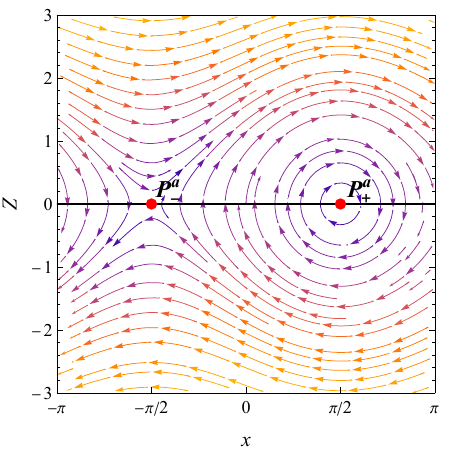}}\hspace{0.5cm}
\subfigure[Poincar\'e section of the perturbed  ABC flow~\eqref{eq:A-system}]{
\includegraphics[width=0.4\textwidth]{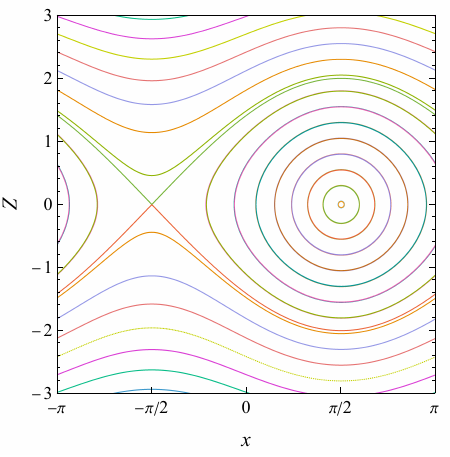}}
\caption{(Color online)
Comparison between the phase portrait of the averaged system and the
Poincar\'e section of the ABC flow in case~\textup{(A)}, for
\(a=1\), \(b_1=c_1=1\), and \(\varepsilon=1/500\).
In both panels, \(x\in[-\pi,\pi]\).
The points \(P^a_\pm\) are the equilibria of the averaged system
corresponding to the elliptic and hyperbolic periodic orbits of the
full flow. The qualitative agreement between the two panels shows
that the averaged system captures the local structure of the
Poincar\'e dynamics near the unperturbed family of periodic orbits.
}
\label{fig:Poincare-averaged-comparison}
\end{figure*}
Fig.~\ref{fig:Poincare-averaged-comparison}(a) shows the phase
portrait of the averaged system in the \((x,Z)\)-plane. The equilibrium
\(P^a_+\) is a center surrounded by closed trajectories, whereas
\(P^a_-\) is a hyperbolic saddle with the corresponding separatrix
structure, in agreement with the classification obtained above.

Fig.~\ref{fig:Poincare-averaged-comparison}(b) shows the corresponding
Poincar\'e section of the ABC flow in the blown-up coordinates
\((x,Z)\), where \(z=\varepsilon Z\). The same qualitative structure is
clearly visible: an elliptic island near \(P^a_+\) and a hyperbolic
structure near \(P^a_-\). Thus the first-order averaged system captures
the main features of the Poincar\'e dynamics near the periodic family
on \(z=0\).

This comparison also supports the numerical observations presented at
the beginning of the paper. In the unperturbed case \(b=c=0\), the
invariant torus \(z=0\) is foliated by the one-parameter family of
periodic solutions~\eqref{eq:sol}. When sufficiently small nonzero
values of \(b\) and \(c\) are introduced, this degenerate family does
not persist as a whole. Instead, two distinguished periodic solutions
bifurcate from it, corresponding to the simple zeros \(P^a_+\) and
\(P^a_-\) of the averaged system. Their elliptic and hyperbolic
characters, respectively, are clearly reflected in the Poincar\'e
section.

\begin{figure}[t]
\centering
 \includegraphics[width=0.41\textwidth]{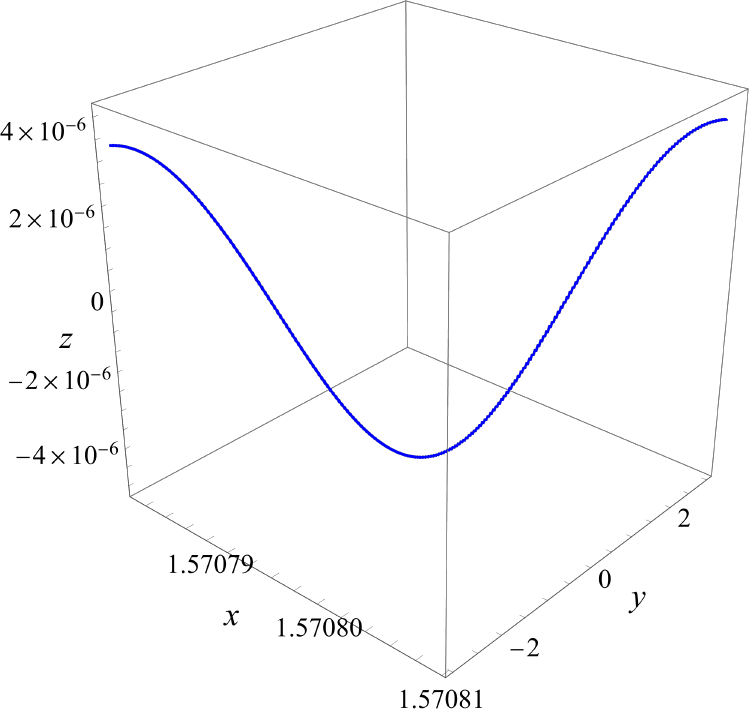}\hspace{0.5cm}
  \includegraphics[width=0.41\textwidth]{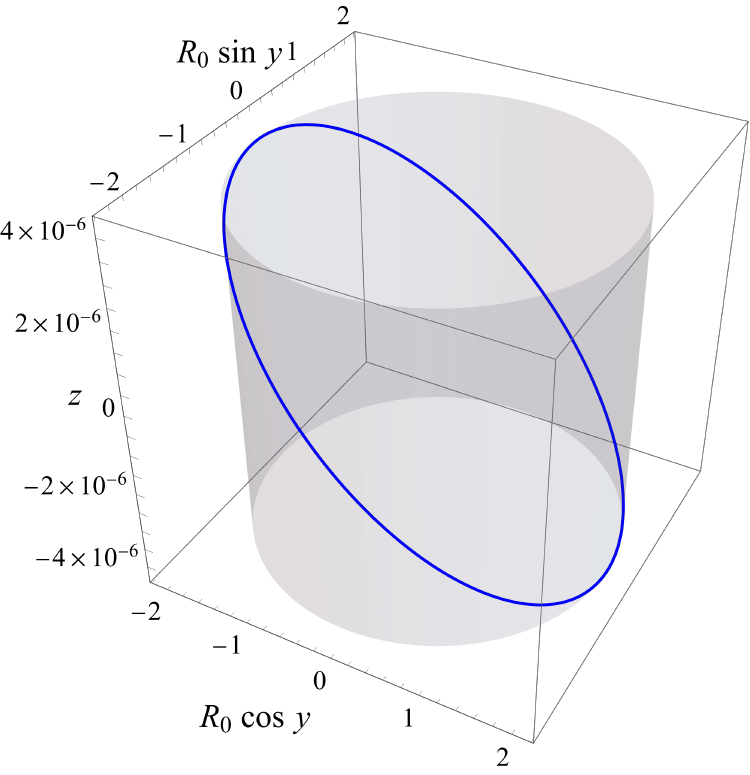}
\caption{(Color online) 
Periodic solution of the perturbed ABC flow corresponding to the simple zero
\(P^a_+=(\pi/2,0)\) of the averaged system in Case~A. The upper panel shows the lifted trajectory in the original variables
\((x,y,z)\), localized near \(x=\pi/2\) and \(z=0\). One full winding is
displayed in the symmetric interval \(y\in[-\pi,\pi]\), so that
$
y(T)-y(0)=2\pi.
$
The lower panel shows the cylindrical representation of the
\((y,z)\)-projection,
 $
(y,z)\mapsto
\bigl(R_0\cos y,R_0\sin y,z\bigr).
$}
\label{fig:abc-periodic-streamline}
\end{figure}

Fig.~\ref{fig:abc-periodic-streamline} shows a numerically computed
trajectory corresponding to the periodic orbit bifurcating from the
elliptic equilibrium \(P^a_+\). In the universal covering space, the
trajectory remains close to \(x=\pi/2\) and \(z=0\), while the lifted
angular coordinate changes by \(2\pi\) over one period. For graphical
symmetry, the phase is chosen so that the displayed winding runs from
\(y=-\pi\) to \(y=\pi\). Thus the lifted trajectory is not closed in
\(\mathbb R^3\), whereas its projection onto \(\mathbb T^3\) is a
closed periodic orbit.

The upper panel shows the lifted trajectory in the original variables
\((x,y,z)\). In the lower panel, its \((y,z)\)-projection is represented
using the cylindrical embedding
\[
(y,z)\longmapsto
\bigl(R_0\cos y,R_0\sin y,z\bigr).
\]
Since \(y\) changes by \(2\pi\) over one period, this representation
makes the periodic character of the orbit explicit.

The small transverse displacement visible in the figure is consistent
with the perturbative construction. Indeed,
\[
(x,Z)=P^a_++O(\varepsilon),
\]
and since the \(Z\)-component of \(P^a_+\) vanishes, \(Z=O(\varepsilon)\).
The blow-up \(z=\varepsilon Z\) therefore gives
$
z=O(\varepsilon^2).
$
For \(\varepsilon=1/500\), this corresponds to the scale
$
\varepsilon^2=4\times10^{-6},
$
consistent with the transverse displacement observed numerically.

\section{  \(C^{1}\) non-integrability}
\label{sec:nonint}

The proof of the following theorem combines the
Poincar\'e--Llibre--Valls criterion with the relation between the
eigenvalues of the averaged system and the characteristic multipliers
of the periodic orbits established in Theorem~\ref{thm:LZ}.
The periodic orbits obtained by averaging therefore provide the
obstruction to \(C^1\)-integrability used below.

\newpage
\begin{theorem}[Main theorem]
\label{thm:abc-nonintegrability}
Assume that one of the perturbative regimes
\textup{(A)}, \textup{(B)}, or \textup{(C)} of
Theorem~\ref{thm:main-cyclic} holds. Then, for all sufficiently small
\(\eps>0\), the characteristic multiplier \(1\) of each corresponding
periodic orbit has algebraic multiplicity one.

Consequently, the ABC flow~\eqref{eq:abc} admits no nonconstant first
integral \(H\in C^1\), defined in a neighborhood of any of these
periodic orbits, such that
\[
\nabla H(\vx)\neq0
\]
at every point \(\vx\) of the corresponding periodic orbit.
\end{theorem}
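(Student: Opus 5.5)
The plan is to work in regime (A) and transfer the result to (B) and (C) by the cyclic symmetry~\eqref{eq:sym}. That symmetry maps periodic orbits to periodic orbits and conjugates their monodromy matrices, so it preserves both the characteristic multipliers and the regularity of a $C^1$ first integral.

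\emph{Step 1: check the hypotheses of Theorem~\ref{thm:LZ}.} After the blow-up $z=\eps Z$ and the change to the angular time $y$, the reduced system~\eqref{eq:A-reduced} has the form~\eqref{eq:averaging-standard} with $T=2\pi$ and $\vF=\vF_a$. The remainder $\vR$ comes from dividing the analytic functions $\dot x,\dot Z$ by $\dot y=a+\cO(\eps^2)$, which is bounded away from zero. Hence $\vF$ and $\vR$ are real analytic in $(y,x,Z,\eps)$ on compact sets, and in particular satisfy the required $C^2$ and $C^1$ regularity.

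\emph{Step 2: multipliers of the reduced system.} The zeros $P^a_\pm$ are simple. By Theorem~\ref{thm:abc-stability-compact}, the eigenvalues of $D_{\vX}\bar{\vF}_a(P^a_\pm)$ are $\pm\sqrt{\mp b_1/a}$. Since $b_1\neq0$, these are nonzero and distinct (either $\pm\lambda$ with $\lambda$ real, or $\pm\rmi\omega$ with $\omega\neq0$), so they are simple. Theorem~\ref{thm:LZ} then gives reduced multipliers $\rho_{1,2}(\eps)=\exp(2\pi\eps\lambda_{1,2})+\cO(\eps^2)$.

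\emph{Step 3: multipliers of the full orbit.} By the discussion following Theorem~\ref{thm:LZ}, the multipliers of $\Gamma^a_{\pm,\eps}$ are $1$ and the $\rho_j(\eps)^{\pm1}$; the choice of inverse depends on the sign of $a$ and, since the eigenvalues come in pairs $\pm\lambda$, only relabels the two multipliers. This gives~\eqref{eq:full-multiplier-averaging-abc}, so $\mu_{j+1}(\eps)-1=2\pi\eps\lambda_j+\cO(\eps^2)\neq0$ for small $\eps>0$.

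A subtle point arises in the elliptic case: $\lambda_j=\pm\rmi\omega$ and $|\mu_{j+1}|$ could equal $1$. We only need $\mu_{j+1}\neq1$, and $|\mu_{j+1}-1|\ge 2\pi\eps|\omega|-C\eps^2>0$ suffices. Therefore $1$ is a simple root of the characteristic polynomial of $\vM_\Gamma$. Here one should note that algebraic, not merely geometric, multiplicity is being controlled: the characteristic polynomial factors as $(\mu-1)(\mu-\mu_2)(\mu-\mu_3)$ with $\mu_2,\mu_3\neq1$. This step requires justifying that the monodromy of the flow in time $t$ over the full period is similar to $\operatorname{diag}(1,\Phi(2\pi,\eps))$ in block-triangular form, via the Poincaré map on the section $y=0$.

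\emph{Step 4: apply the criterion.} Theorem~\ref{thm:PLV} applies to each $\Gamma^a_{\pm,\eps}$ and yields the nonexistence of a regular $C^1$ first integral near the orbit. Lifting locally from $\TT^3$ to $\RR^3$ is harmless because the criterion is local near $\Gamma$. Regimes (B) and (C) then follow by symmetry.

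The main obstacle I expect is the uniform error control in Step 3. The nontrivial multipliers differ from $1$ only at order $\eps$, so the $\cO(\eps^2)$ error in~\eqref{eq:LZ-monodromy} must be genuinely second order. The identification of the full monodromy spectrum with $\{1\}\cup\operatorname{spec}\Phi(2\pi,\eps)$ must also introduce no further perturbation, since even an $\cO(\eps)$ error there would destroy the argument. I would handle this by using the standard fact that the nontrivial multipliers of a periodic orbit equal the eigenvalues of the derivative of the Poincaré map at its fixed point. Exactly $\Phi(2\pi,\eps)$ in the coordinates $(x,Z)$ is linearly conjugate, by $\operatorname{diag}(1,\eps)$, to the map in $(x,z)$, and this conjugacy does not change the eigenvalues.
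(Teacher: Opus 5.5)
Your proposal is correct and follows essentially the same route as the paper. The paper's chain of steps is the same as yours: nonzero averaged eigenvalues at $P^a_\pm$, the Llibre--Zhang relation giving $\mu_{j+1}(\eps)-1=2\pi\eps\lambda_j+\cO(\eps^2)\neq0$, simplicity of the trivial multiplier, and then Theorem~\ref{thm:PLV}, with regimes (B) and (C) obtained by cyclic symmetry. Your additional justifications (analyticity of $\vR$, the block-triangular form of $\vM_\Gamma$ relative to the Poincar\'e map on $y=0$, and the $\operatorname{diag}(1,\eps)$ conjugacy) make explicit what the paper leaves to the discussion after Theorem~\ref{thm:LZ}. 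The one loose phrase is ``lifting'' $\Gamma$ to $\RR^3$, where it is not closed. This does not matter, because the criterion's argument is local along $\Gamma$ and applies on $\TT^3$ directly.
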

\vskip20pt
\begin{proof}[Proof of Theorem~\ref{thm:abc-nonintegrability}]
We give the proof in regime \(\mathrm{(A)}\); the other two cases follow
by cyclic symmetry. By Theorem~\ref{thm:main-cyclic}, each simple zero
\(P^a_\pm\) gives rise, for all sufficiently small \(\eps>0\), to a
periodic orbit \(\Gamma^a_{\pm,\eps}\) of the ABC flow. By
Theorem~\ref{thm:abc-stability-compact}, the corresponding eigenvalues
of the averaged Jacobian are
\[
\lambda_{1,2}(P^a_+)
=
\pm\sqrt{-\frac{b_1}{a}},
\qquad
\lambda_{1,2}(P^a_-)
=
\pm\sqrt{\frac{b_1}{a}}.
\]
Since \(a b_1\neq0\), all these eigenvalues are nonzero.

Using~\eqref{eq:full-multiplier-averaging-abc}, the two nontrivial
characteristic multipliers of either periodic orbit satisfy
\[
\mu_{j+1}(\eps)
=
\exp\!\left(
2\pi\eps\lambda_j(P^a_\pm)
\right)
+
\cO(\eps^2),
\qquad j=1,2,
\]
after a possible relabeling. Hence
\[
\mu_{j+1}(\eps)-1
=
2\pi\eps\lambda_j(P^a_\pm)+\cO(\eps^2),
\qquad j=1,2.
\]
Since \(\lambda_j(P^a_\pm)\neq0\), it follows that
\[
\mu_2(\eps)\neq1,
\qquad
\mu_3(\eps)\neq1
\]
for all sufficiently small \(\eps>0\). The remaining characteristic
multiplier is the trivial multiplier \(\mu_1(\eps)=1\), and therefore
it has algebraic multiplicity one.

Theorem~\ref{thm:PLV} now implies that the ABC flow admits no
nonconstant \(C^1\) first integral \(H\), defined in a neighborhood
of either \(\Gamma^a_{\pm,\eps}\), such that
\[
\nabla H(\vx)\neq0
\qquad
\text{for all }\vx\in\Gamma^a_{\pm,\eps}.
\]

The same argument applies in regimes \(\mathrm{(B)}\) and
\(\mathrm{(C)}\) by the cyclic symmetry~\eqref{eq:sym}. In these cases
the relevant eigenvalue pairs are
\[
\pm\sqrt{-\frac{c_1}{b}},
\quad
\pm\sqrt{\frac{c_1}{b}},
\qquad\text{and}\qquad
\pm\sqrt{-\frac{a_1}{c}},
\quad
\pm\sqrt{\frac{a_1}{c}},
\]
respectively. They are nonzero under the assumptions of
Theorem~\ref{thm:main-cyclic}, and the conclusion follows again from
Theorem~\ref{thm:PLV}.
\end{proof}

\begin{remark}\upshape
\label{rem:integrable-plane-critical-integral}
Theorem~\ref{thm:abc-nonintegrability} does not require both transverse
perturbation coefficients to be nonzero. For instance, in regime
\(\mathrm{(A)}\), the nondegeneracy condition requires \(ab_1\neq0\),
whereas \(c_1\) may vanish, since its contribution has zero average with
respect to the fast angle \(y\).

This does not contradict the integrability of the case \(c=0\). Indeed,
the ABC flow then admits the first integral
\[
H(x,y,z)=a\cos z+b\sin x.
\]
The two periodic orbits selected by averaging are, in this case, given
explicitly by
\[
x(t)\equiv\pm\frac{\pi}{2},
\qquad
z(t)\equiv0,
\qquad
y(t)=y_0+(a\pm b)t,
\]
with the signs chosen consistently. Since
\[
\nabla H(x,y,z)
=
\left(b\cos x,\,0,\,-a\sin z\right),
\]
we have
\[
\nabla H\left(\pm\frac{\pi}{2},y,0\right)=0.
\]
Thus, although a first integral exists, it is critical along the periodic
orbits selected by averaging. This is fully consistent with
Theorem~\ref{thm:abc-nonintegrability}, which excludes only \(C^1\) first
integrals that are regular along the corresponding periodic orbit.

By cyclic symmetry, the same observation applies to regimes
\(\mathrm{(B)}\) and \(\mathrm{(C)}\).
\end{remark}



 
\section{Numerical monodromy and comparison with the averaging}
\label{sec:numerical-monodromy}
 
\subsection{Numerical verification of the characteristic multipliers}
\label{subsec:numerical-multipliers}

The proof of Theorem~\ref{thm:abc-nonintegrability} uses the asymptotic
relation between the eigenvalues of the averaged Jacobian and the
characteristic multipliers of the bifurcating periodic orbits. Here we
test this relation numerically by computing the periodic orbits and their
monodromy matrices directly for the original ABC flow.

Since Theorem~\ref{thm:abc-nonintegrability} is perturbative, it does not
provide a quantitative estimate of the accuracy of the first-order
multiplier approximation at finite parameter values. We therefore compare
the characteristic multipliers predicted by averaging with those obtained
numerically from the monodromy matrices of the full ABC system.

We restrict the numerical analysis to regime \(\mathrm{(A)}\), since the
remaining two regimes follow by cyclic permutation. We use the parameter
values specified in~\eqref{eq:parki2} and compute the corresponding
periodic orbits numerically. The associated averaged vector field has the
two simple zeros \(P^a_+\) and \(P^a_-\), which generate the periodic
families \(\Gamma^a_{+,\eps}\) and \(\Gamma^a_{-,\eps}\), respectively.

All numerical computations were performed in \textit{Mathematica 15} using up to \(30\)-digit working precision. The original ABC system and the associated
variational equations were integrated with \texttt{NDSolveValue}, with
\texttt{AccuracyGoal} and \texttt{PrecisionGoal} both set to \(11\) and
\texttt{MaxSteps} set to \texttt{Infinity}. The shooting equations were
solved with \texttt{FindRoot}, using a maximum of \(200\) iterations and
accepting only solutions whose closing residual was smaller than
\(10^{-9}\). The monodromy matrix was obtained by integrating the full
\(3\times3\) fundamental matrix simultaneously with the periodic orbit.
Its accuracy was monitored through the closing error, Liouville's identity
\(\det M=1\), the autonomous-flow relation
\(M\vv(\vx_0)=\vv(\vx_0)\), and the reciprocal-pair condition
\(\mu_2\mu_3=1\).

By Theorem~\ref{thm:LZ}, the nontrivial characteristic multipliers of
the bifurcating periodic orbits are determined, to first order in
\(\eps\), by the eigenvalues of the corresponding averaged Jacobian.
For the parameter values~\eqref{eq:parki2}, this gives
\begin{equation}
\label{eq:pred}
\begin{aligned}
\Gamma^a_{+,\eps}:&\qquad
\mu_1=1,\qquad
\mu_{2,3}
=
\exp(\pm 2\pi\rmi\eps)+\cO(\eps^2),
\\
\Gamma^a_{-,\eps}:&\qquad
\mu_1=1,\qquad
\mu_{2,3}
=
\exp(\pm 2\pi\eps)+\cO(\eps^2).
\end{aligned}
\end{equation}
For \(\eps=1/500\), the corresponding first-order predictions are
\[
\mu_{2,3}
=
0.999921\pm0.012566\,\rmi
\]
for the elliptic branch, and
\[
\mu_2=1.012646,
\qquad
\mu_3=0.987512
\]
for the hyperbolic branch.

To compute the corresponding periodic orbits of the original ABC flow,
we use the leading-order locations determined by the simple zeros of the
averaged system as initial guesses:
\[
\vx_{\mathrm g}^{\,+}
=
\left(\frac{\pi}{2},0,0\right),
\qquad
\vx_{\mathrm g}^{\,-}
=
\left(-\frac{\pi}{2},0,0\right).
\]
These points determine the limiting locations of the two periodic
orbits but satisfy the periodicity conditions only to leading order.
We therefore refine both the initial conditions and the periods by a
shooting procedure.

We fix the phase by taking
$
\vx_0=(x_0,0,z_0)
$
on the section \(y=0\pmod{2\pi}\). The unknowns \(x_0\), \(z_0\), and \(T\) are determined from the lifted
periodicity conditions
\begin{equation}
\label{eq:shooting-A}
\begin{aligned}
x(T;x_0,0,z_0)-x_0&=0,\\
y(T;x_0,0,z_0)-2\pi&=0,\\
z(T;x_0,0,z_0)-z_0&=0.
\end{aligned}
\end{equation}
Here the equations are solved for the lifted variables in
\(\mathbb R^3\). The second condition describes one complete winding in
the positive \(y\)-direction, while the first and third require the
remaining coordinates to return to their initial values. Thus a
solution of~\eqref{eq:shooting-A} determines a periodic orbit on
\(\TT^3\).

The section \(y=0\pmod{2\pi}\) is transverse to the periodic orbits
considered here. Indeed,
\[
\dot y=b\sin x+a\cos z,
\]
and at the limiting orbits \(x=\pm\pi/2\), \(z=0\), we have
\[
\dot y=a\pm b.
\]
For the parameters~\eqref{eq:parki2}, \(a=1\) and
\(b=\eps^2b_1\), so \(\dot y>0\) for sufficiently small \(\eps>0\).
The choice of the positive shift \(2\pi\) in~\eqref{eq:shooting-A} is
therefore consistent with the orientation of the flow.

Once the periodic orbit and its period have been determined, we
integrate the variational equations
\begin{equation}
\label{eq:abc-variational-numerical}
\dot{\vPhi}(t)
=
D_{\vx}\vv\bigl(\vx(t)\bigr)\vPhi(t),
\qquad
\vPhi(0)=\vI_3.
\end{equation}
together with the ABC system over one period. The monodromy matrix is
then
\[
M=\vPhi(T),
\]
and its eigenvalues \(\mu_1,\mu_2,\mu_3\) are the characteristic
multipliers of the periodic orbit.

Although the shooting equations are formulated in \(\RR^3\), the
matrix \(\vPhi(T)\) also represents the monodromy matrix of the
corresponding periodic orbit on \(\TT^3\). Indeed, the identification
\[
(x,y+2\pi,z)\sim(x,y,z)
\]
is induced by a translation, whose differential is the identity.
Moreover, the ABC vector field and its Jacobian are \(2\pi\)-periodic
in each spatial variable.

The numerical computation was checked using several independent
consistency conditions. Since the ABC vector field is divergence-free,
$
\nabla\cdot\vv=0,
$
Liouville's formula gives
\begin{equation}
\label{eq:abc-liouville-check}
\det M
=
\exp\left(
\int_0^T
\nabla\cdot\vv\bigl(\vx(t)\bigr)\,\rmd t
\right)
=
1.
\end{equation}
Moreover, since the ABC system is autonomous, the tangent vector to a
periodic orbit is an eigenvector of the monodromy matrix corresponding
to the trivial multiplier \(1\), i.e.,
\begin{equation*}
\label{eq:abc-trivial-multiplier-check}
M\vv(\vx_0)=\vv(\vx_0).
\end{equation*}
It then follows from~\eqref{eq:abc-liouville-check} that the two
remaining characteristic multipliers must satisfy
\[
\mu_2\mu_3=1.
\]

To monitor the accuracy of the numerical shooting procedure and the integration
of the variational equations, we use the residuals
\[
E_{\mathrm{cl}}
=
\left\|
\varphi_T(\vx_0)-\vx_0-(0,2\pi,0)
\right\|,
\qquad
E_{\det}
=
\left|\det M-1\right|,
\]
and
\[
E_{\mathrm{tan}}
=
\left\|
M\vv(\vx_0)-\vv(\vx_0)
\right\|,
\qquad
E_{\mathrm{rec}}
=
\left|\mu_2\mu_3-1\right|,
\]
where \(\varphi_T(\vx_0)\) denotes the time-\(T\) flow map of the ABC
system in \(\RR^3\), and \(\|\cdot\|\) denotes the Euclidean norm.
The first residual measures the closing error of the periodic orbit,
while the remaining three measure the deviations from the determinant,
tangent-vector, and reciprocal-pair identities, respectively.

The resulting residuals are summarized in
Table~\ref{tab:abc-numerical-checks}. The closing, determinant, and
reciprocal-pair residuals are of order \(10^{-15}\), while the
tangent-vector residual is of order \(10^{-12}\), confirming the
numerical consistency of both the shooting and monodromy computations.

\begin{table}[t]
\centering
\small \setlength{\tabcolsep}{1.8pt}
\begin{tabular}{ccccc}
\hline
Family
&
\(E_{\mathrm{cl}}\)
&
\(E_{\det}\)
&
\(E_{\mathrm{tan}}\)
&
\(E_{\mathrm{rec}}\)
\\
\hline
\(\Gamma^a_{+,\eps}\)
&
\(9.16\times10^{-16}\)
&
\(8.99\times10^{-15}\)
&
\(1.50\times10^{-12}\)
&
\(1.11\times10^{-15}\)
\\
\(\Gamma^a_{-,\eps}\)
&
\(1.90\times10^{-15}\)
&
\(2.22\times10^{-16}\)
&
\(1.49\times10^{-12}\)
&
\(4.44\times10^{-15}\)
\\
\hline
\end{tabular}
\caption{Numerical consistency checks for the shooting procedure and
the monodromy-matrix computation. The quantities
\(E_{\mathrm{cl}}\), \(E_{\det}\), \(E_{\mathrm{tan}}\), and
\(E_{\mathrm{rec}}\) are defined in the text.}
\label{tab:abc-numerical-checks}
\end{table}

The characteristic multipliers obtained from the numerical monodromy
matrices are compared with the first-order averaging
predictions~\eqref{eq:pred} in
Table~\ref{tab:abc-floquet-comparison}.

\begin{table*}[t]
\centering
\small \setlength{\tabcolsep}{8pt}
\begin{tabular}{ccccc}
\hline
Periodic family
&
Multiplier
&
Analytical prediction
&
Numerical monodromy
&
Absolute error
\\
\hline

&
\(\mu_1\)
&
\(1\)
&
\(1.000000\)
&
\(7.33\times10^{-15}\)
\\
\(\Gamma^a_{+,\eps}\)
&
\(\mu_2\)
&
\(0.999921+0.012566\,\rmi\)
&
\(0.999921+0.012566\,\rmi\)
&
\(5.03\times10^{-8}\)
\\
&
\(\mu_3\)
&
\(0.999921-0.012566\,\rmi\)
&
\(0.999921-0.012566\,\rmi\)
&
\(5.03\times10^{-8}\)
\\
\hline

&
\(\mu_1\)
&
\(1\)
&
\(1.000000\)
&
\(4.44\times10^{-15}\)
\\ \(\Gamma^a_{-,\eps}\)
&
\(\mu_2\)
&
\(1.012646\)
&
\(1.012646\)
&
\(5.09\times10^{-8}\)
\\
&
\(\mu_3\)
&
\(0.987512\)
&
\(0.987512\)
&
\(4.96\times10^{-8}\)
\\
\hline
\end{tabular}
\caption{Comparison between the characteristic multipliers predicted by
first-order averaging and those computed numerically from the monodromy
matrices of the periodic families
\(\Gamma^a_{+,\eps}\) and \(\Gamma^a_{-,\eps}\).
The absolute error is defined by
\(\lvert\mu_j^{\mathrm{num}}-\mu_j^{\mathrm{av}}\rvert\).
The multipliers are rounded to six decimal places, whereas the errors
were computed from the full numerical values.}
\label{tab:abc-floquet-comparison}
\end{table*}

For the hyperbolic family \(\Gamma^a_{-,\eps}\), both averaging and the
numerical monodromy computation give two real reciprocal multipliers,
one inside and one outside the unit circle. The corresponding periodic
orbit is therefore hyperbolic.

For the elliptic family \(\Gamma^a_{+,\eps}\), the two nontrivial
multipliers form a nonreal complex-conjugate pair. Since the ABC flow
is volume-preserving and the trivial multiplier is equal to \(1\),
their product is equal to \(1\). Hence the two nontrivial multipliers
have unit modulus, and the corresponding periodic orbit is elliptic.

The numerical multipliers agree with the first-order averaging
predictions to approximately \(5\times10^{-8}\) for both periodic
families. In particular, the computations reproduce the spectral
structure used in Theorem~\ref{thm:abc-nonintegrability}: one trivial
multiplier equal to \(1\) and two nontrivial multipliers different
from \(1\).

\subsection{Numerical continuation and accuracy of the averaged approximation}
\label{subsec:range-validity}

We now examine how the accuracy of the first-order approximation changes
when the periodic orbits are continued away from the integrable
\(a\)-axis. We first consider the elliptic branch
\(\Gamma^a_{+,\eps}\) along the one-parameter path
\[
(a,b,c)=\left(1,b,\eps^2\right),
\qquad
\eps=\frac{1}{500},
\]
using \(b\) as the continuation parameter.

 For every value of \(b\), the periodic orbit is determined by
solving the shooting equations~\eqref{eq:shooting-A}, with the orbit
computed at the preceding parameter value used as the initial
approximation. The corresponding variational equations are then
integrated over the numerically determined period \(T=T(b)\), yielding
the monodromy matrix and its two nontrivial characteristic multipliers~$
\mu_{2,3}^{\mathrm{num}}(b).
$

Since \(b=\eps^2b_1\) and for the present choice \(a=1\), the first-order averaged prediction for the
elliptic branch can be written directly in terms of the actual parameter
\(b\) as
\begin{equation}
\label{eq:averaged-multipliers-b}
\mu_{2,3}^{\mathrm{av}}(b)
=
\exp\left(\pm 2\pi\rmi\sqrt{b}\right).
\end{equation}

For the elliptic branch, the numerical multipliers form a
complex-conjugate pair and, up to numerical precision, satisfy
\[
\mu_3^{\mathrm{num}}(b)
=
\overline{\mu_2^{\mathrm{num}}(b)}
=
\left(\mu_2^{\mathrm{num}}(b)\right)^{-1}.
\]
The averaged multipliers satisfy the corresponding relations exactly,
\[
\mu_3^{\mathrm{av}}(b)
=
\overline{\mu_2^{\mathrm{av}}(b)}
=
\left(\mu_2^{\mathrm{av}}(b)\right)^{-1}.
\]
Thus, both pairs lie on the unit circle, up to numerical errors in the
monodromy computation, and reciprocal pairing does not amplify the
absolute discrepancy between the two approximations.
We emphasize that, along this continuation, \(b_1=b/\eps^2\) is no longer
fixed. Consequently, the computations below should be interpreted as a
numerical examination of the range over which the first-order formula
\eqref{eq:averaged-multipliers-b} remains quantitatively accurate, rather
than as an extension of the perturbative theorem to arbitrary finite
values of \(b\).

Fig.~\ref{fig:mu} compares the real and imaginary parts of the
nontrivial characteristic multipliers obtained from the numerical
monodromy matrix with the corresponding first-order averaged
predictions. Since the continuation interval spans several orders of
magnitude, the parameter \(b\) is displayed on a logarithmic scale. This
representation resolves both the perturbative regime close to the
integrable \(a\)-axis and the gradual loss of accuracy at larger values
of \(b\).

For
$
b\lesssim 10^{-2},
$
the numerical and averaged multipliers correspond well. Indeed, the real parts  and the imaginary parts   lie almost exactly on top of one another throughout this parameter interval. Thus, the first-order averaged approximation remains highly accurate not only at the single perturbative parameter
value considered in the preceding subsection, but also over a substantially larger interval of the actual ABC coefficient \(b\).

A visible discrepancy appears when \(b\) reaches values of order
\(10^{-1}\). Near \(b\approx0.1\), the numerical and averaged
multipliers begin to differ in both their real and imaginary parts; see
Fig.~\ref{fig:mu}. For larger values of \(b\), the difference
increases, although the averaged formula still gives a good qualitative
description of the multipliers.

\begin{figure}[t]
\centering
\subfigure[Real parts of the characteristic multipliers]{
\includegraphics[width=0.45\textwidth]
{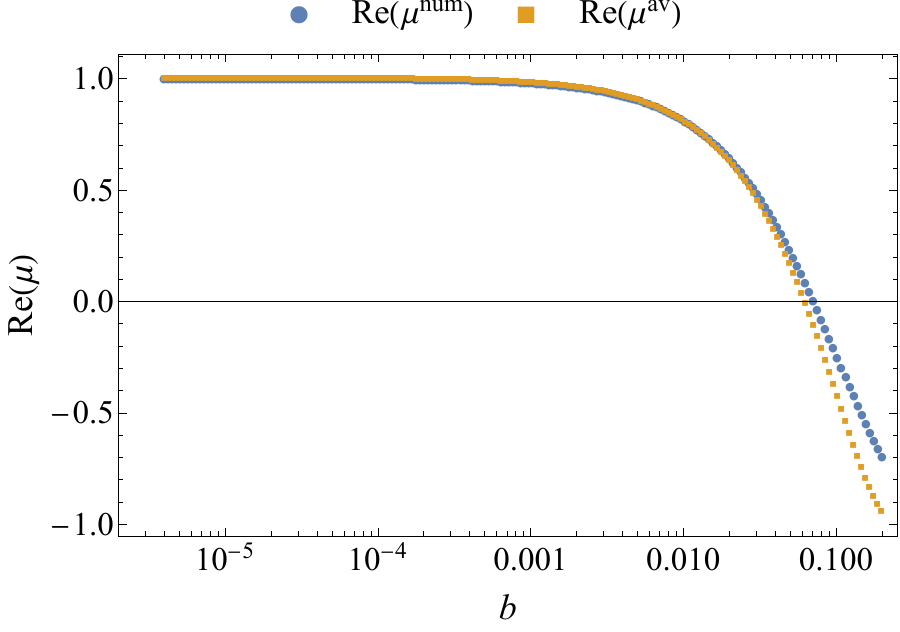}}
\subfigure[Imaginary parts of the characteristic multipliers]{
\includegraphics[width=0.45\textwidth]
{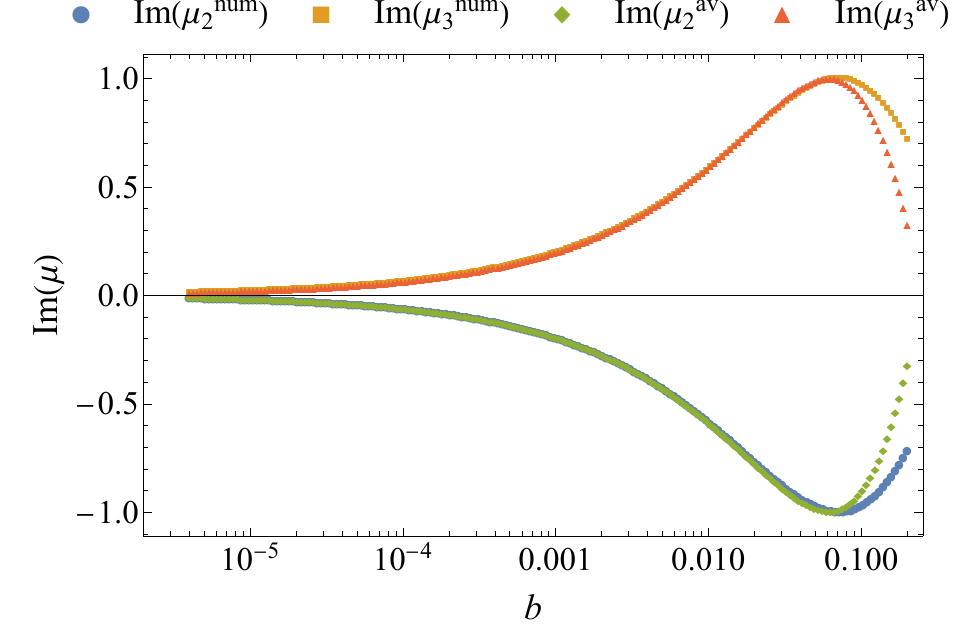}}
\caption{(Color online) 
Comparison between the nontrivial characteristic multipliers computed
numerically from the monodromy matrix and those predicted by the
first-order averaged system along the continued elliptic branch
\(\Gamma^a_{+,\eps}\). Panel~(a) shows the real parts, whereas
panel~(b) shows the imaginary parts. The continuation parameter \(b\) is
displayed on a logarithmic scale in both panels.
}
\label{fig:mu}
\end{figure}

To quantify the loss of accuracy, we introduce the maximum absolute
multiplier error
\begin{equation}
\label{eq:maximum-multiplier-error}
E(b)
=
\max_{j=2,3}
\left|
\mu_j^{\mathrm{num}}(b)
-
\mu_j^{\mathrm{av}}(b)
\right|.
\end{equation}
The dependence of \(E(b)\) on the continuation parameter is shown in
Fig.~\ref{fig:error}(a). Both axes are logarithmic, which allows the
error to be shown over several orders of magnitude. For the smallest
values of \(b\), the error is of order \(10^{-8}\), in agreement with
the comparison reported in Table~\ref{tab:abc-floquet-comparison}.
The error increases systematically with \(b\), remaining small for
\(b<10^{-2}\) and increasing noticeably for values of order \(10^{-1}\).
\begin{figure*}[t]
\centering
\subfigure[Maximum multiplier error]{
\includegraphics[width=0.459\textwidth]
{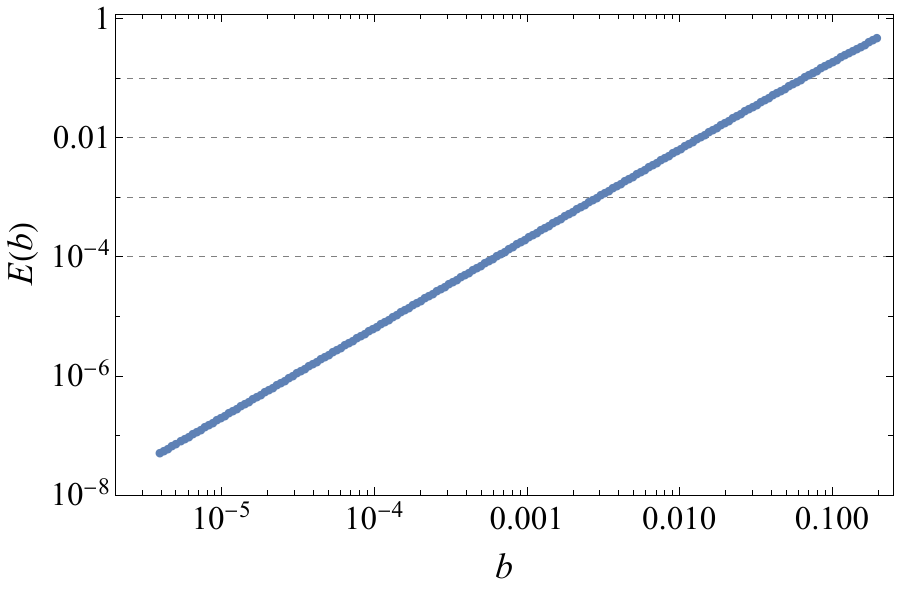}}\hspace{1cm}
\subfigure[Period of the continued periodic orbit]{
\includegraphics[width=0.45\textwidth]
{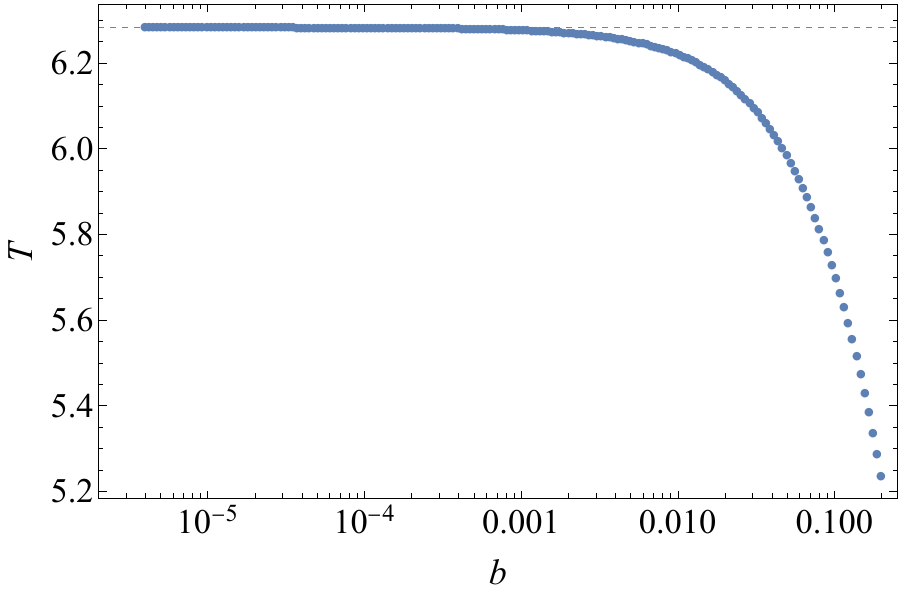}}
\caption{(Color online) 
Accuracy and continuation diagnostics for the elliptic periodic branch
\(\Gamma^a_{+,\eps}\). Panel~(a) shows the maximum multiplier error
\(E(b)\) defined in~\eqref{eq:maximum-multiplier-error} on
double-logarithmic scales. Panel~(b) shows the numerically computed period
\(T(b)\) associated with one positive winding in the \(y\)-direction;
only the horizontal axis is logarithmic in this panel. The dashed
horizontal line represents the limiting value \(T_0=2\pi\).
}
\label{fig:error}
\end{figure*}

The deterioration of the multiplier approximation is accompanied by a
deformation of the periodic orbit itself. For every value of \(b\), the
shooting procedure determines the time \(T(b)\) for which the lifted
trajectory satisfies
\[
x(T)=x(0),
\qquad
y(T)=y(0)+2\pi,
\qquad
z(T)=z(0).
\]
Therefore, \(T(b)\) is the period of the corresponding periodic orbit on
\(\TT^3\), associated with one complete winding in the positive
\(y\)-direction. In the integrable limit \(b=c=0\), it approaches the
unperturbed period \(T_0\). For the present choice \(a=1\), one has
\(T_0=2\pi\).

As shown in Fig.~\ref{fig:error}(b), where the horizontal axis is again
logarithmic, the numerically computed period remains  close to
\(2\pi\) for
$
b\lesssim 10^{-2}.
$ 
In this range, both the periodic orbit and its transverse variational
dynamics remain close to their perturbative limits, consistently with
the excellent agreement of the characteristic multipliers. For larger
values of \(b\), the period begins to decrease noticeably, and the orbit
progressively departs from the limiting periodic family.

The deviation of \(T(b)\) from \(2\pi\) indicates that the periodic
orbit moves away from its perturbative limit as \(b\) increases.
At the same time, the orbit itself is deformed, and the Jacobian matrix
\(D\vv(\vx(t))\) evaluated along it differs from its first-order
approximation. These effects, together with higher-order corrections,
account for the increasing difference between
\(\mu_{2,3}^{\mathrm{num}}\) and \(\mu_{2,3}^{\mathrm{av}}\).

For completeness, we also continue the hyperbolic branch
\(\Gamma^a_{-,\eps}\) over the same parameter interval. Throughout the
computed range, its two nontrivial characteristic multipliers remain
real, positive, and reciprocal. We denote the stable and unstable
multipliers by \(\mu_{\mathrm{s}}\) and \(\mu_{\mathrm{u}}\), respectively.
Thus,
\[
0<\mu_{\mathrm{s}}^{\mathrm{num}}(b)<1
<
\mu_{\mathrm{u}}^{\mathrm{num}}(b),
\qquad
\mu_{\mathrm{s}}^{\mathrm{num}}(b)
\mu_{\mathrm{u}}^{\mathrm{num}}(b)
=1
\]
up to numerical precision. Their first-order averaged approximations are
\[
\mu_{\mathrm{s}}^{\mathrm{av}}(b)
=
\exp\left(-2\pi\sqrt{b}\right),
\qquad
\mu_{\mathrm{u}}^{\mathrm{av}}(b)
=
\exp\left(2\pi\sqrt{b}\right).
\]
To measure the accuracy of these approximations, we define
\begin{equation*}
\label{eq:hyperbolic-multiplier-errors}
E_{\mathrm{s}}(b)
=
\left|
\mu_{\mathrm{s}}^{\mathrm{num}}(b)
-
\mu_{\mathrm{s}}^{\mathrm{av}}(b)
\right|,
\quad
E_{\mathrm{u}}(b)
=
\left|
\mu_{\mathrm{u}}^{\mathrm{num}}(b)
-
\mu_{\mathrm{u}}^{\mathrm{av}}(b)
\right|.
\end{equation*}
\begin{figure*}[t]
\centering
\subfigure[Stable and unstable characteristic multipliers]{
\includegraphics[width=0.45\textwidth]
{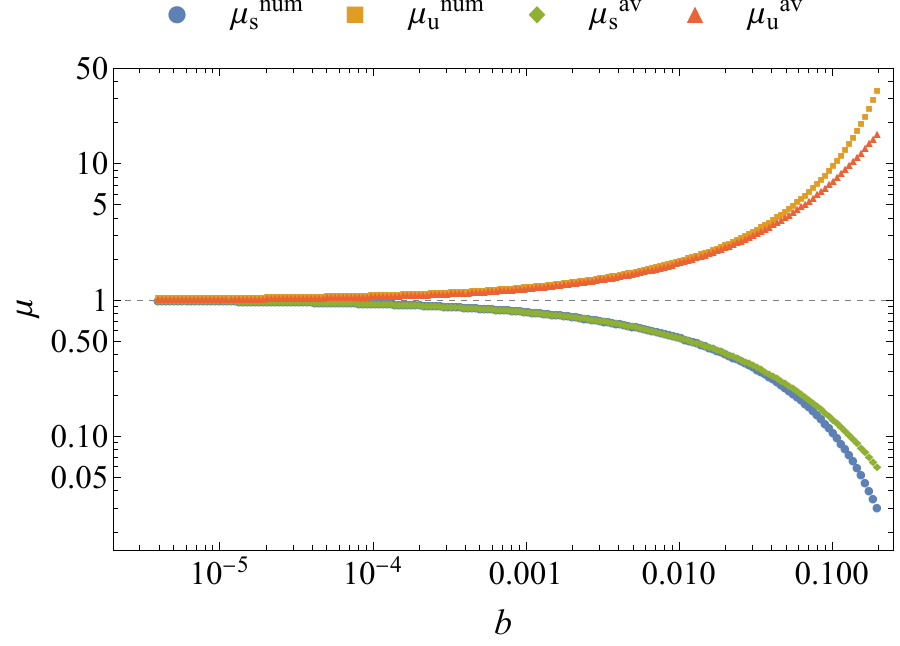}}\hspace{1cm}
\subfigure[Absolute multiplier errors]{
\includegraphics[width=0.46\textwidth]
{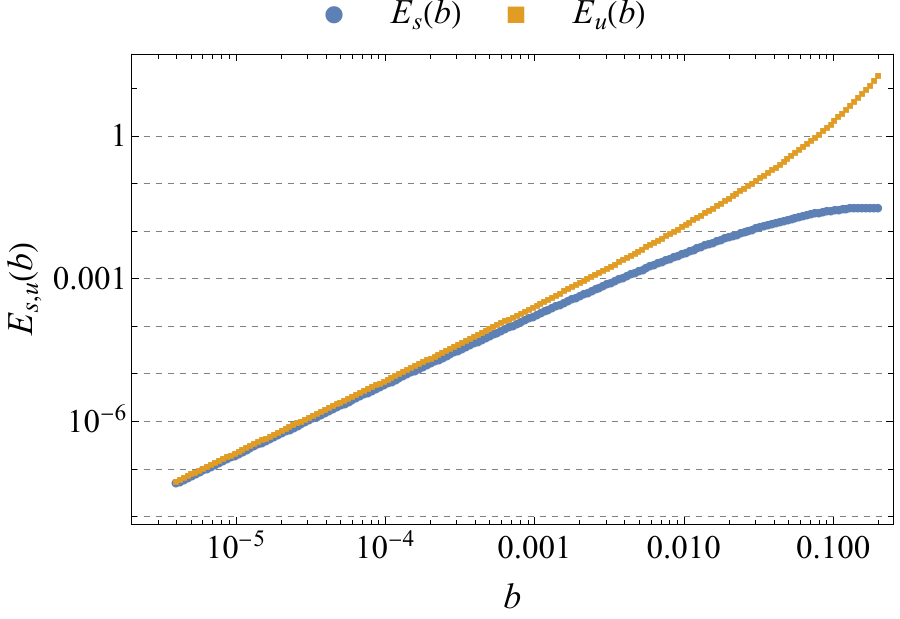}}
\caption{(Color online) 
Characteristic multipliers along the hyperbolic branch
\(\Gamma^a_{-,\eps}\). Panel~(a) compares the numerical stable and
unstable multipliers with their first-order averaged approximations on
double-logarithmic scales. Panel~(b) shows the corresponding absolute
errors, also on double-logarithmic scales.
}
\label{fig:hyperbolic-continuation}
\end{figure*}

Fig.~\ref{fig:hyperbolic-continuation}(a) compares the numerical
multipliers with their averaged approximations. For small \(b\), the
agreement is very good. As \(b\) increases, the stable multiplier
decreases from values close to \(1\), while the unstable multiplier
increases. A visible difference between the numerical and averaged
values appears for \(b\) of order \(10^{-1}\). In this range, the
numerical stable multiplier lies below its averaged approximation,
whereas the numerical unstable multiplier lies above it.

Fig.~\ref{fig:hyperbolic-continuation}(b) shows the corresponding
errors \(E_{\mathrm{s}}(b)\) and \(E_{\mathrm{u}}(b)\). Both are small
near the integrable axis and increase with \(b\). For larger values of
\(b\), the error of the unstable multiplier becomes considerably larger
than that of the stable multiplier.

This difference follows directly from the reciprocal relation. Indeed,
\[
\mu_{\mathrm{u}}^{\mathrm{num}}(b)
=
\left(\mu_{\mathrm{s}}^{\mathrm{num}}(b)\right)^{-1},
\qquad
\mu_{\mathrm{u}}^{\mathrm{av}}(b)
=
\left(\mu_{\mathrm{s}}^{\mathrm{av}}(b)\right)^{-1},
\]
and therefore
\[
E_{\mathrm{u}}(b)
=
\frac{E_{\mathrm{s}}(b)}
{
\left|
\mu_{\mathrm{s}}^{\mathrm{num}}(b)
\mu_{\mathrm{s}}^{\mathrm{av}}(b)
\right|
}.
\]
As the stable multipliers decrease, the difference between the
numerical and averaged unstable multipliers becomes larger. This
effect does not occur for the elliptic branch, since its nontrivial
multipliers remain on the unit circle. Nevertheless, the periodic
orbit remains hyperbolic throughout the investigated parameter
interval.

The continuation results show that the first-order approximation is
accurate for small \(b\). Noticeable differences between the numerical
and averaged multipliers appear when \(b\) becomes of order
\(10^{-1}\).

\section{Conclusions}

The ABC flow has long served as a fundamental model of
three-dimensional incompressible dynamics in which regular and chaotic
motions coexist. In the Poincar\'e sections considered here, this
coexistence becomes particularly transparent near the integrable
\(a\)-axis. When \(b\) and \(c\) are nonzero, the sections contain both
smooth invariant curves and an apparent stochastic region. As \(b\) and
\(c\) decrease, the visible chaotic layer becomes progressively
narrower and the dynamics approaches the degenerate integrable limit.
At the same time, two distinct structures become apparent: an elliptic
island formed by invariant curves surrounding one periodic orbit and a
broken-separatrix region associated with another, hyperbolic periodic
orbit. In the integrable limit, these isolated structures are replaced
by a continuous family of periodic streamlines. This geometric
transition provided the starting point for the perturbative analysis
developed in this work.

The natural question suggested by these sections is which members of
the degenerate periodic family survive when all three ABC parameters
become nonzero. First-order averaging gives a precise answer. Near each
integrable coordinate axis, one parameter is kept fixed and nonzero,
while the other two are scaled as quantities of order \(\eps^2\). After
a transverse blow-up and the introduction of the appropriate angular
variable as the new independent variable, the original problem reduces
to a two-dimensional periodic system. Its averaged vector field has two
simple equilibria, and these equilibria select two periodic solutions
from the degenerate family of the limiting flow. The cyclic symmetry of
the ABC vector field carries the same construction from one coordinate
axis to the other two, providing a unified description of all three
perturbative regimes.

The averaged systems also explain the different dynamical roles of the
two surviving periodic orbits. The eigenvalues of their linearizations
determine the leading-order behavior of the two nontrivial
characteristic multipliers of the corresponding periodic orbits of the
full ABC flow. In each regime, one orbit is elliptic and the other is
hyperbolic, with their assignment determined by the sign of the
relevant parameter ratio. Thus, the elliptic islands and the
hyperbolic structures visible in the Poincar\'e sections are not merely
numerical features: they arise from the two simple equilibria of the
averaged transverse dynamics.

This relation between averaged eigenvalues and characteristic
multipliers is also what connects the periodic-orbit construction with
the integrability problem. For sufficiently small \(\eps>0\), neither
of the two nontrivial multipliers is equal to \(1\). Consequently, the
trivial multiplier associated with the flow direction has algebraic
multiplicity one. The Poincar\'e--Llibre--Valls criterion then excludes
the existence of a nonconstant first integral \(H\in C^1\), defined in
a neighborhood of any of the constructed periodic orbits and regular
along it.

The central contribution of the present approach is therefore not only
the detection of periodic solutions near an integrable limit. The
periodic orbits required for the non-integrability argument are
constructed directly from the degenerate families rather than assumed
in advance or imported from a separate analysis. Moreover, the same
averaging calculation that proves their existence also classifies their
transverse dynamics and supplies the characteristic-multiplier
information needed for the non-integrability criterion. The breakup of
the degenerate periodic family, the emergence of elliptic and
hyperbolic periodic orbits, and the obstruction to regular \(C^1\)
first integrals are thereby placed within a single perturbative
framework.

Returning to the original ABC system closes the connection between this
perturbative description and the full dynamics. The periodic solutions
predicted by averaging provide effective initial approximations for the
shooting method, while integration of the full variational equations
gives their monodromy matrices and characteristic multipliers. For the
benchmark parameter values used in the direct comparison, the numerical
multipliers agree with the first-order predictions with absolute errors
of order \(10^{-8}\). The closing errors, Liouville-identity residuals,
tangent-vector residuals, and reciprocal-pair residuals provide
independent checks on these computations. Numerical continuation then
follows the elliptic and hyperbolic branches away from the immediate
asymptotic regime. It shows that the first-order predictions remain
quantitatively accurate over a wider parameter interval before the
discrepancies gradually become visible as the system moves farther from
the integrable axis.

The present results complement the Lyapunov-spectrum computations for a
broader range of ABC parameters reported in~\cite{SzuminskiLIT2026}.
Those computations describe the dynamics beyond the strict
perturbative regime, whereas the analysis developed here identifies the
mechanism by which particular periodic streamlines survive the
perturbation and explains how their transverse dynamics obstructs the
existence of regular \(C^1\) first integrals.

More broadly, the argument suggests a strategy for other
volume-preserving systems possessing degenerate families of periodic
orbits in an integrable limit. Poincar\'e sections may first reveal the
geometric structures that persist after perturbation; averaging can
then select and construct the corresponding periodic solutions, while
their characteristic multipliers provide a link between the surviving
periodic dynamics and the integrability problem.

\subsection*{Acknowledgments}

This work was carried out during a research stay of W.~S. at the Department of
Mathematics of the Universitat Autònoma de Barcelona, hosted by J.~L.

\subsection*{Funding}

W.~S. was supported by the Polish National Agency for Academic Exchange
(NAWA) through the Bekker Programme (grant no.~BPN/BEK/2025/1/00055).
J.~L. was partially supported by the Agencia Estatal de Investigación of Spain
(grant PID2022-136613NB-I00).

\subsection*{Author contributions}

W.~S. conceived the study, developed the analytical framework, performed the
analytical and numerical computations, prepared the figures and tables, and
wrote the original draft. J.~L. contributed to the theoretical discussion, the
interpretation of the non-integrability results, and the critical revision of
the manuscript. Both authors read and approved the final version of the
manuscript.

\appendix
\section{Proof of Theorem~\ref{thm:LZ}}

\label{app:LZ-proof}

\begin{proof}
Let
\[
\mathcal P_\eps(\vX_0)
=
\vX(T;0,\vX_0,\eps)
\]
denote the period map of
system~\eqref{eq:averaging-standard} based at \(s=0\).
Under the assumed regularity, integration of
system~\eqref{eq:averaging-standard} over one period gives
\begin{equation}
\label{eq:period-map-expansion}
\mathcal P_\eps(\vX_0)
=
\vX_0
+
\eps T\bar{\vF}(\vX_0)
+
\cO(\eps^2),
\end{equation}
where the remainder is \(\cO(\eps^2)\) in \(C^1\) on a
neighborhood of \(\vX_*\).

For \(\eps\neq0\), we define the normalized displacement map by
\[
\mathcal D(\vX_0,\eps)
=
\frac{\mathcal P_\eps(\vX_0)-\vX_0}{\eps}.
\]
Expansion~\eqref{eq:period-map-expansion} shows that this map extends
to \(\eps=0\) by setting
\[
\mathcal D(\vX_0,0)
=
T\bar{\vF}(\vX_0).
\]
Since
$
\mathcal D(\vX_*,0)=0
$
and
\[
D_{\vX_0}\mathcal D(\vX_*,0)
=
T D_{\vX}\bar{\vF}(\vX_*)
\]
is nonsingular, the Implicit-Function Theorem yields a fixed point
\(\vX_0(\eps)\) of \(\mathcal P_\eps\) satisfying
\[
\vX_0(\eps)
=
\vX_*+\cO(\eps).
\]
This fixed point determines the \(T\)-periodic solution
\(\vX(s,\eps)\). Since the right-hand side of
\eqref{eq:averaging-standard} is of order \(\eps\), we also have
\[
\vX(s,\eps)
=
\vX_*+\cO(\eps)
\]
uniformly for \(s\in[0,T]\).

The derivative of the period map at this fixed point is the monodromy
matrix of the corresponding reduced periodic solution. Therefore,
using the \(C^1\)-expansion~\eqref{eq:period-map-expansion}, we obtain
\begin{align*}
\vPhi(T,\eps)
&=
D_{\vX_0}\mathcal P_\eps(\vX_0(\eps))
\\
&=
\vI_m
+
\eps T
D_{\vX}\bar{\vF}(\vX_0(\eps))
+
\cO(\eps^2)
\\
&=
\vI_m
+
\eps T
D_{\vX}\bar{\vF}(\vX_*)
+
\cO(\eps^2)
\\
&=
\exp\!\left(
\eps T D_{\vX}\bar{\vF}(\vX_*)
\right)
+
\cO(\eps^2),
\end{align*}
which proves~\eqref{eq:LZ-monodromy}.

Now set
\[
\vK_\eps
=
\frac{\vPhi(T,\eps)-\vI_m}{\eps}.
\]
Then
\[
\vK_\eps
=
T D_{\vX}\bar{\vF}(\vX_*)
+
\cO(\eps).
\]
Let \(\kappa_j(\eps)\) be the eigenvalues of \(\vK_\eps\). Since the
eigenvalues \(\lambda_j\) of
\(D_{\vX}\bar{\vF}(\vX_*)\) are simple, standard matrix perturbation
theory shows that, after a suitable labeling,
\[
\kappa_j(\eps)
=
T\lambda_j+\cO(\eps),
\qquad
j=1,\ldots,m.
\]
Since
\[
\vPhi(T,\eps)
=
\vI_m+\eps\vK_\eps,
\]
the corresponding characteristic multipliers of the reduced periodic
system satisfy
\begin{align*}
\rho_j(\eps)
&=
1+\eps\kappa_j(\eps)
\\
&=
1+\eps T\lambda_j+\cO(\eps^2)
\\
&=
\exp(\eps T\lambda_j)+\cO(\eps^2),\qquad j=1,\ldots,m.
\end{align*}
which proves~\eqref{eq:multiplier-averaging}.
\end{proof}

\end{document}